\documentclass[a4paper,11pt,reqno]{amsart}

%%%%%%%%%%%%%%%%%%%%%%%%%%%
%PACKAGES
%%%%%%%%%%%%%%%%%%%%%%%%%%%
\usepackage{hyperref}
\usepackage{amsmath,amssymb}
\usepackage{graphicx}
\usepackage{multicol}
\usepackage{enumerate}
\usepackage[T1]{fontenc}

\usepackage{bbold}

%%%%%%%%%%%%%%%%%%%%%%%%%%%
%TEMPORARY PACKAGES
%%%%%%%%%%%%%%%%%%%%%%%%%%%

%\usepackage[notref,notcite]{showkeys}
%\usepackage{refcheck}
%\usepackage{showframe}

%%%%%%%%%%%%%%%%%%%%%%%%%%%
%DOCUMENT  LAYOUT
%%%%%%%%%%%%%%%%%%%%%%%%%%%
\setlength{\textwidth}{16.5cm}
\setlength{\textheight}{23cm}
\setlength\parindent{0.5cm}
%\setlength{\oddsidemargin}{0cm}
%\setlength{\evensidemargin}{0cm}
%\setlength{\marginparwidth}{2cm}
%\hoffset=0truecm
\hoffset=-1.9truecm
\voffset=-1truecm
\footskip = 30pt
\marginparsep=-0.1cm

%%%%%%%%%%%%%%%
%theorem's style
%%%%%%%%%%%%%%%	

\newtheorem{lemma}{Lemma}[section]
\newtheorem{theorem}[lemma]{Theorem}
\newtheorem{proposition}[lemma]{Proposition}

\theoremstyle{definition}
\newtheorem{remark}[lemma]{Remark}

\newtheorem{assumption}[lemma]{Assumption}

%%%%%%%%%%%%%%%%%%%%%%%%%%%
%NUMBERING EQUATIONS
%%%%%%%%%%%%%%%%%%%%%%%%%%%
\numberwithin{equation}{section}

%%%%%%%%%%%%%%%%%%%%%%%%%%%
%NEWCOMMANDS
%%%%%%%%%%%%%%%%%%%%%%%%%%%
\newcommand{\al}{\alpha}
\newcommand{\de}{\delta}

\renewcommand{\Im}{\operatorname{Im}}
\renewcommand{\Re}{\operatorname{Re}}
\newcommand{\cA}{\mathcal{A}}

\DeclareMathOperator{\RE}{{\mathcal R}e}

\def\be{\begin{equation}}
\def\ee{\end{equation}}
\def\qquad{{\quad\quad}}

\def\RE{\mathbb R}
\def\CO{{\mathbb C}}

\def\sgn{\text{\rm sgn}}
\def\S{\mathcal S}
\def\q{\underline q}
\def\z{\underline z}
\def\II{\mathbb{1}}

\title[The 1-D Dirac equation with concentrated nonlinearity]{The 1-D Dirac equation with concentrated nonlinearity}

\author{Claudio Cacciapuoti}
\address{DiSAT, Sezione di Matematica, Universit\`a dell'Insubria, via Valleggio 11, I-22100
Como, Italy}
\email{claudio.cacciapuoti@uninsubria.it}
\author{Raffaele Carlone}
\address{Dipartimento di Matematica e Applicazioni R. Caccioppoli,  Universit\`a Federico II di Napoli, MSA, via Cinthia, I-80126, Napoli, Italy}
\email{raffaele.carlone@unina.it}
\author{Diego Noja}
\address{Dipartimento di Matematica e Applicazioni, Universit\`a di Milano Bicocca, via R. Cozzi 53, I-20125 Milano, Italy}
\email{diego.noja@unimib.it}
\author{Andrea Posilicano}
\address{DiSAT, Sezione di Matematica, Universit\`a dell'Insubria, via Valleggio 11, I-22100
Como, Italy}
\email{andrea.posilicano@uninsubria.it}

\thanks{
C.C. and R.C.  acknowledge the support of the FIR 2013 project ``Condensed Matter in Mathematical Physics'', Ministry of University and
Research of Italian Republic  (code RBFR13WAET)}

\begin{document}

\begin{abstract}
We define and study the Cauchy problem for a 1-D nonlinear Dirac equation with nonlinearities concentrated at one point. Global well-posedness is provided and conservation laws for mass and energy are shown. Several examples, including nonlinear Gesztesy-\v{S}eba  models and the concentrated versions of the Bragg Resonance, Gross-Neveu, and Soler type models, all within the scope of the present paper, are given. The key point of the proof consists in the reduction of the original equation to a nonlinear integral equation for an auxiliary, space-independent variable (the ``charge'').
\end{abstract}

%\begin{abstract}
%We define and study the Cauchy problem for a 1-D nonlinear Dirac equation with nonlinearities concentrated at one point. Global well-posedness is provided and conservation laws for mass and energy are shown. Several examples 
% including 
%the nonlinear versions of the Gesztesy-\v Seba  models and models   corresponding  to concentrated versions of the nonlinearities in the  mimicking the
% Soler, Gross-Neveu, Bragg Resonance type and others, all within the scope of the present paper, are given. The key point of the proof consists in the reduction of the original equation to a nonlinear integral equation for an auxiliary, space-independent variable (the {\it charge}).
%\end{abstract}

\maketitle

\begin{footnotesize}
 \emph{Keywords:} nonlinear Dirac equation, well-posedness, point interactions.
 
 \emph{MSC 2010:}  35Q41, 35A01, 35B25.
 \end{footnotesize}

\vspace{1cm}

\section{Introduction.}

\noindent
The interest in the nonlinear Dirac equation traces back to the paper \cite{Thirring:1958gw}, where a soluble nonlinear quantum field model in 1+1 space-time dimensions for self-interacting fermions was introduced. Other well known quantum field theoretic examples are given in \cite{Soler:1970}, again describing a self-interacting electron in 3+1 spacetime dimensions, and later in \cite{Gross:1974gs}, this one describing a model related to quantum chromodynamics. However, the nonlinear Dirac equation appears  also as an effective equation in condensed matter physics, here describing localization effects for solutions of Nonlinear Schr\"odinger or Gross-Pitaevskii equation in small periodic potentials (see e.g. \cite{goodman:2001wh} and the monography \cite{Pel:2011} for extended description and bibliography). Relevant applications are in photonic crystals and in Bose Einstein condensates, where a 2-D nonlinear Dirac equation plays the role of an effective equation governing the evolution of wavepackets spectrally concentrated near Dirac points of a periodic optical lattice (see \cite{AZ:12, DeSterke:2008ty,  FW:12, goodman:2001wh}  and references therein). Inspired by the above models, the rigorous analysis of the Dirac equation with general nonlinearities is now a major subject. As regards regularity well-posedness results we only mention some of the relevant papers as \cite{Bournaveas:2008wl, Candy:2011vk, Delgado:1978be, goodman:2001wh, Machihara:2010eo, Selberg:2010uv}. Specifically, for the 1-D case, results about the global well-posedness in the Sobolev space $H^{1}(\mathbb{R})$ for several types of nonlinearities  are known. For a review about the global well-posedness of nonlinear Dirac equation in one space dimension see \cite{Pelinovsky:2010vh}.\par
In this paper we define and solve the Cauchy problem for a Dirac-type equation with concentrated nonlinearity. By this we mean that the nonlinearity is space dependent and acts at a single point in space. Models of this type are popular in physics in the case of the Schr\"odinger equation (see e.g. \cite{Bulashenko:1996cd, Dror:2011kr, Malomed:1993bg}), and there is also a growing literature of mathematical character related to their well-posedness \cite{Adami:2003te, Adami:2004te, Adami:2001bt} see also \cite{HolmerLiu}, orbital and asymptotic stability of standing waves see \cite{Cecilia1}, and approximation through smooth space dependent nonlinearities \cite{Claudio, Cacciapuoti:2014gt, CFNTrxv15} see also \cite{Komech:2007ia}. A related work on the wave equation in dimension three is \cite{noja2005wave}. To the knowledge of present authors, there is no analogue activity related to the Dirac equation and this paper is possibly a first contribution to the subject (see however the interesting paper \cite{KK:10} where a model which represents a regularization of a concentrated nonlinearity is considered). To introduce the problem in the simplest way, (details are given in the following sections) we consider the Dirac operator 
\[
D_{m}\Psi:=-i\hbar\,c\,\sigma_{1}\frac{d\Psi}{dx}+m\, c^{2}\sigma_{3}\Psi
%\begin {bmatrix}m\,c^2 & -i\hbar\,c\frac{d}{dx} \\-i\hbar\,c\,\,\frac{d}{dx} & -m\,c^2\end {bmatrix}
\]
where $\sigma_{1}$ and $\sigma_3$ is a suitable choice of Pauli matrices.
The nonlinear Dirac equation with a space dependent nonlinearity is given by 
\[
i\hbar\,\frac{\partial}{\partial t}\Psi=D_{m}\Psi +V\,g(\Psi)\,,
\]
where $V = V(x)$. In this paper we would  ideally treat the case where $V\to \delta_y$ weakly. This limit procedure can be consistently pursued in the case of  nonlinear Schr\"odinger equation, and yields to a well defined, non trivial and nonlinear dynamics (see \cite{Claudio, Cacciapuoti:2014gt} and references therein). The corresponding three dimensional model has been also studied mainly from the mathematical point of view (see \cite{Adami:2003te, Adami:2004te, Cecilia1,CFNTrxv15}).
The same constructive analysis could be attempted for the Dirac equation, but here we make use of a more abstract approach which has the virtue of complete generality. The starting point is the construction of linear singular perturbations of the Dirac operator, well known for a long time (see \cite{SAlbeverio:2011vta, Benvegnu:1994ci, Carlone:2013ht,  Gesztesy:1987gn}). 
The idea is to restrict the free Dirac operator $D_{m}$ to regular functions out of the point $y$, obtaining a symmetric, not self-adjoint operator. The self-adjoint extensions of this operator give rise to a unitary dynamics. Among them there is of course the Dirac operator itself, but many others exist which differ for the singular behavior at the point $y$. They are parametrized through a singular boundary condition embodied in the domain of the extended operator:
$$
\{\Psi\in H^{1}({\mathbb R}\backslash\{y\})\otimes\CO^{2}: \ \ ic\,\sigma_{1}[\Psi]_{y} =A\q\}\,,
$$
where the  two component vector $\q:=\frac12(\Psi(y^{+})+\Psi(y^{-}))$, also called  {\it charge}, is the mean value,  and $[\Psi]_{y}:=\Psi(y^{+})-\Psi(y^{-})$ is the jump of the spinor $\Psi$ at $y$, while $A$ is any $2\times 2 $ Hermitian matrix. The case $A=0$ gives of course the free Dirac operator on the line, while in all other cases there are singularities at the point $y$, because the jump of $\Psi$ is non trivial. It is easy to see that one ends up with the evolution described by the distributional equation
$$
 i\hbar\frac{d}{d t}\Psi(t) = D_{m}\Psi+i\hbar\, c\, \sigma_{1}[\Psi]_{y}\delta_{y}
$$
with $\Psi$ belonging to the above domain.\par
To define a nonlinear dynamics, we let the matrix $A$ dependent on the charge $\q$, arriving to a nonlinear operator $H^{nl}_{A}$ with a domain characterized by a nonlinear boundary condition at the point $y$:
\[
D(H^{nl}_{A})  = 
\left\{\Psi \in \mathcal{H} :\Psi \in H^{1}(\RE\backslash\{y\}) \otimes \CO^{2},\,ic \,\sigma_{1}[\Psi ]_{y} = A(\q) \q  \right\},
\]
where the matrix $A(\underline z)$ is Hermitian for all $\underline z\in \CO^2$. \par
Under a technical condition, see Assumption \ref{a:F}, we will show the following well-posedness result (see Th. \ref{t:main} for the precise statement):\\ 

For any $\Psi_\circ\in D(H^{nl}_{A})$ there exists a unique, global in time, solution $\Psi(t)$ of the Cauchy problem
\[
\begin{cases}
i\hbar\frac{d}{d t}\Psi(t)=H^{nl}_{A}\Psi(t)=D_{m}\Psi+\hbar A(\q)\q\,\delta_{y}
&\\
\Psi(0)=\Psi_{\circ}\in D(H^{nl}_{A}) \,.
\end{cases}
\]
\medskip 

%more precisely, $\Psi\in C^{1}(\RE_{+},L^{2}(\RE)\otimes \CO^{2})\cap C^{0}(\RE_{+},D(H^{nl}_{A}))$ and 1.5 holds for any $t\ge 0$ in strong sense. 
%\vskip4pt\noindent
%
%
%
%The Cauchy problem for the abstract Dirac equation so obtained is 
%and it is the main subject of this paper. We will study the well-posedness of such an initial value problem under the following hypothesis on the matrix-valued vector field  $\z\mapsto A(\z)$:
%\vskip4pt\noindent
%{\bf Assumption A.}
%Define $F_{A}:\CO^2 \to \CO^2\,,\quad F_{A}(\underline z) : = \underline z + \frac{i}{2c}\, A(\underline z)\underline z$;
%then the map $\z\mapsto A(\z)$ from $\CO^2$ to the space  of $2\times2$ self-adjoint matrices is such that  $F_{A}$ is $C^1$-diffeomorphism as a map on $\RE^{4}$ to itself.
%\vskip4pt\noindent
%By Hadamard's global inverse function theorem such an assumption is equivalent to 
%\vskip4pt\noindent
%{\bf Assumption A'.} The map $\z\mapsto A(\z)$ from $\CO^2$ to the space  of $2\times2$ self-adjoint matrices is such that  $F_{A}$ is $C^1(\RE^{4},\RE^{4})$ and its Jacobian determinant never vanishes.
%\vskip4pt\noindent
A relevant fact about the proof of the main theorem is that, re-casting the initial value problem in integral form through the Duhamel formula,  $\Psi(t)$  turns out to depend on the solution of a nonlinear integral equation (giving the evolution of the charge $\q$, see equation \eqref{qnl}), which rules the behavior of the system. Once the solution of this nonlinear integral equation is guaranteed, a representation formula for the solution of the Cauchy problem (which seems to be new even in the linear case) allows to close the proof of the theorem. The Assumption \ref{a:F} on $A(\z)$ is needed to treat existence and uniqueness of the solution of the charge equation, and the rest of the proof consists in assuring the stated regularity properties of the solution. \par To conclude the introduction we now give a brief outline of the content of the various sections of the paper.\par
In the preliminary Section 2, in order to render self-contained the presentation, we recall the definition of the 1-D Dirac equation with a linear point interaction. Here we also provide a new representation formula for the solution of the linear Cauchy problem (see Proposition \ref{p:cauchylin}).\par
Section 3 is the core of the paper. The definition of the Dirac operator perturbed by a concentrated nonlinearity is given and it is shown how to split the nonlinear flow in the sum of the free flow plus a part containing the charge only (depending on the total initial datum) which satisfies  a nonlinear integral equation. It is then shown that the charge equation in the stated hypotheses admits unique solution and the main theorem is proved. The Section ends with the proof of three complementary but relevant properties. It is proven the independence of global well-posedness results on the special representation of the algebra of Dirac matrices employed, see Rem. \ref{equiv}; the mass (or $L^2$-norm), see Th. \ref{t:mass}, and energy  conservation laws, see Th. \ref{t:energy}, are shown. As regards the energy, in order to get conservation one has to restrict the admissible matrix fields $A(\z)$ imposing the constraint $A(\z)={\mathcal A}(\bar\z,\z)={\mathcal A}(\z,\bar\z)$.\par
In Section 4 several examples are given, also but not only in the effort to make contact with usual nonlinearities treated in the literature. Among others, the nonlinear versions of the Gesztesy-\v Seba models and the concentrated nonlinearities mimicking the Soler-type, Gross-Neveu and Bragg resonance models are treated. Finally, in Appendix A the representation formula for the free Dirac evolution in one dimension is recalled and 
%in Appendix B 
the $H^{1}$-regularity in time of the evaluation at the singularity of the free part of the evolution is proved, see Prop. \ref{H1}.\\

Throughout the paper we shall use the following notation 
\begin{itemize}
\item
The inner  product between two vector valued functions in $L^2(\mathbb{R})\otimes\mathbb{C}^2$ is denoted by $\langle\Psi,\Phi\rangle$, and it is antilinear in the first argument. The corresponding norm is simply denoted by $\|\Psi\|$.
\item
The inner  product between two vectors in $\mathbb{C}^2$ is denoted by $\langle \z,\underline{\xi}\rangle_{\CO^2}$, and it is antilinear in the first argument. The corresponding norm is simply denoted by $|\z|$, not to be mistaken with the usual absolute value which is denoted in the same way.
\item
$C$ denotes a generic positive constant whose value may change from line to line. 
\end{itemize}

\section{The Cauchy problem for the Dirac equation with point interactions.\label{s:lin}}
Let $D_{m}:\S'(\RE)\otimes\CO^{2}\to \S'(\RE)\otimes\CO^{2}$ be the differential operator  
\[
D_{m}\Psi:=-i\hbar\,c\,\sigma_{1}\frac{d\Psi}{dx}+m\, c^{2}\sigma_{3}\Psi
%\begin {bmatrix}m\,c^2 & -i\hbar\,c\frac{d}{dx} \\-i\hbar\,c\,\,\frac{d}{dx} & -m\,c^2\end {bmatrix}
\]
corresponding to the free 1-D Dirac operator with mass $m\ge 0$. Here $\hbar$ is Plank's constant, $c$ the light velocity, $\S'(\RE)$ denotes the space of tempered distribution, $\Psi=\begin {pmatrix}\psi_1 \\\psi_2\end {pmatrix}$ and $\sigma_{1}$ and $\sigma_{3}$ are the  first and the third among the  three Pauli matrices
\[
\sigma_{1}=\begin {bmatrix}0& 1 \\1 & 0\end {bmatrix},\qquad 
\sigma_{2}=\begin {bmatrix}0& -i \\i & 0\end {bmatrix},\qquad
\sigma_{3}=\begin {bmatrix}1& 0 \\0 & -1\end {bmatrix}.
\]
On the Hilbert space $L^2(\mathbb{R})\otimes\mathbb{C}^2$, the linear operator 
$$
H: D(H)\subset L^2(\mathbb{R})\otimes\mathbb{C}^2\to L^2(\mathbb{R})\otimes\mathbb{C}^2
\,,\qquad H\Psi:=D_{m}\Psi$$ 
with domain $D(H)=H^{1}(\RE)\otimes\CO^{2}$ is self-adjoint, where $H^{1}(\RE)$ denotes the Sobolev space of square integrable functions with square integrable  first order distributional derivatives.\par
Now we recall the construction of the self-adjoint singular perturbations of $H$ formally corresponding to the addition of a $\delta$-type potential (see e.g. \cite{SAlbeverio:2011vta, Benvegnu:1994ci, Carlone:2013ht, Gesztesy:1987gn}). \par
Given $y\in\RE,$ let $H_{-}$ and $H_{+}$ be the free Dirac  operators on $L^{2}(-\infty,y)\otimes\CO^2$ and $L^{2}(y,+\infty)\otimes\CO^2$ with domains 
$D(H_{-})=H^{1}(-\infty,y)\otimes\CO^{2}$ and $D(H_{+})=H^{1}(y,+\infty)\otimes\CO^{2}$  respectively. Denoting by $H_{\circ}$ the restriction of $H$ to the domain $D(H_{\circ}):=\{\Psi\in H^{1}(\RE):\Psi(y)=0\}$,  one has that $H_{\circ}$ is closed symmetric, has defect indices $(2,2)$ and  adjoint $H_{\circ}^{*}=H_{-}\oplus H_{+}$. In order to define self-adjoint extensions of $H_{\circ}$ we consider Hermitian $2\times2$ matrices 
\begin{equation*}
A = \begin {bmatrix}
\al_1 &\gamma \\ 
\bar \gamma &\al_2
\end {bmatrix},\qquad \al_1,\al_{2} \in\RE\,,\ \gamma \in \CO\,.
\end{equation*}
Then one gets a self-adjoint operator 
$H_{A}$ on $L^2(\mathbb{R})\otimes\mathbb{C}^2$ by restricting $H_{-}\oplus H_{+}$ to the
domain
\begin{align}
D(H_{A})=&
%\left\{\Psi\in (H^{1}(-\infty,y)\oplus H^{1}(y,+\infty))\otimes\CO^{2}:\,i\,c\,\sigma_{1}[\Psi]_{y}=Aq\right\}  \label{DHAy1}\\=&
\left\{\Psi\in H^{1}({\mathbb R}\backslash\{y\})\otimes\CO^{2}: ic\,\sigma_{1}[\Psi]_{y} =A\q\right\},  \label{DHAy2}
\end{align}
where $H^{1}({\mathbb R}\backslash\{y\}):=H^{1}(-\infty,y)\oplus H^{1}(y,+\infty)$, 
 \[
[\Psi]_{y}=\begin {pmatrix} [\psi_1]_{y}\\ [\psi_2]_{y}
 \end {pmatrix}:=\Psi(y^{+})-\Psi(y^{-})
 %,\qquad [\psi_j]_{y}:=\psi_j(y^{+})-\psi_j(y^{-}),
\]
denotes the jump of $\Psi$ at the point $y$ and 
\begin{equation*}
\q=\begin {pmatrix} q_{1}\\ q_{2} \end {pmatrix}
:=\frac12\,\big(\Psi(y^{+})+\Psi(y^{-})\big)
%, \qquad q_j := \frac{\psi_j(y^+)+\psi_j(y^-)}2
\end{equation*} 
denotes the mean value of $\Psi$ at the point $y$, and is also called the {\it charge} of the wave vector $\Psi$.
The case $A=0$ gives the free Dirac  operator $H$. By using distributional derivatives one has 
\begin{align}\label{GS}
H_{A}\Psi=D_{m}\Psi+i\hbar\, c\, \sigma_{1}[\Psi]_{y}\delta_{y}
= 
D_{m}\Psi+\hbar A\q\,\delta_{y}\,.
\end{align}
The domain and the action of $H_{A}$ can be described in an alternative way as follows (for simplicity of exposition we consider only the case where $m>0$, a similar description holds also in the $m=0$ case): let $G$ denote the solution of   $-D_{m}G=\delta_{y}\otimes\II$, i.e. 
\[
G(x)=-\frac1{2\hbar c}\,e^{-\frac{mc}{\hbar}|x-y|}(i\,\sgn(x-y)\sigma_{1} +\sigma_{3})
%\begin{bmatrix}1&i\,\sgn(x-y)\\
%i\,\sgn(x-y)&-1\end{bmatrix}\begin{pmatrix}q_{1}\\q_{2}\end{pmatrix}
\,.
\]
Then, since 
\[
{i}{\hbar c}\,\sigma_{1}[G\underline\xi]_{y}=\underline\xi\,,
\]
and $\Psi\in H^{1}(\RE\backslash\{y\})\otimes\CO^{2}$ belongs to $H^{1}(\RE)\otimes\CO^{2}$ if and only if $[\Psi]_{y}=0$, one gets
\[
H^{1}(\RE\backslash\{y\})\otimes\CO^{2}=\{\Psi=\Phi+G\underline\xi\,,\ \Phi\in H^{1}(\RE)\otimes\CO^{2}\,,\ \xi\in\CO^{2}
%\,,\ \xi=i\hbar c\,\sigma_{1}[\Psi_{y}]\,
\}\,,
\]
\[
H_{\circ}^{*}\Psi=H\Phi
\]
and so, since 
\begin{equation}\label{Gmedxi}
\frac12\,\big(G\underline\xi(y^{+})+G\underline\xi(y^{-})\big)=-\frac{\sigma_{3}\underline\xi}{2\hbar c}\,,
\end{equation}
the self-adjoint extension $H_A$ can be equivalently defined as 
\begin{equation}\label{dominio}
D(H_{A})=\left\{\Psi=\Phi+G\underline\xi\,,\ \Phi\in H^{1}(\RE)\otimes\CO^{2}\,,\ \xi\in\CO^{2}\,,\ \left(\II+\frac1{2c}\,A\sigma_{3}\right)\underline\xi=\hbar A\Phi(y)\right\}\,,
\end{equation}
\[
H_{A}\Psi=H\Phi\,.
\]
We now consider the Cauchy problem
\be\label{cauchylin}
\begin{cases}
i\hbar\frac{d}{d t}\Psi(t)=H_{A}\Psi(t)&\\
\Psi(0)=\Psi_{\circ}\,.&
\end{cases}
\ee
Since $H_{A}$ is self-adjoint, such a Cauchy problem is well-posed for any $\Psi_{\circ }\in L^2(\mathbb{R})\otimes\mathbb{C}^2$ by Stone's theorem. In the following proposition we  give a representation formula for the solution of problem \eqref{cauchylin} in the case $\Psi_{\circ}\in D(H_{A})$. For simplicity of exposition we only consider the case $t\ge 0$; a similar representation holds for $t\le 0$. 
%%%%%%%
%PROPOSITION
%%%%%%%
\begin{proposition}\label{p:cauchylin}
Let  $\Psi_\circ \in D(H_{A})$. Then for any $t\geq 0$ the solution  $\Psi(t)=e^{-\frac i\hbar tH_{A}} \Psi_\circ$ of the Cauchy problem \eqref{cauchylin} is given by
\begin{equation}\label{Psilin}\begin{aligned}
\Psi(x,t)= & \Psi^{f}(x,t)  -\frac{i}{2c}\theta\left( t-\frac{|x-y|}c \right) \begin {bmatrix} 1& \sgn(x-y) \\ \sgn(x-y) &1  \end {bmatrix} A\q\left( t-\frac{|x-y|}c \right) \\ &-i\theta\left( t-\frac{|x-y|}c \right)\int_0^{t-\frac{|x-y|}{c}} ds\,K(x-y,t-s) A \q(s)\,,
\end{aligned} \end{equation}
where  $\Psi^{f}(t):=e^{-\frac i\hbar tH} \Psi_\circ$ and   
$\q(t) %= \begin {bmatrix}q_1(t) \\ q_2(t)\end {bmatrix}
$ is the solution of the integral equation 
\begin{equation}\label{qlin}
\q(t) = \Psi^{f}(y,t) -\frac{i}{2c}\, A \q(t)- i \int_0^t ds\,K(0,t-s) A\q(s) \,.
\end{equation}
Here $\theta$ denotes Heaviside's step function and the matrix-valued kernel $K$ is defined by
\[
K(x,t) =- \frac{mc}{2\hbar} \left( i\sigma_{3}J_0\left(\frac{m\,c}{\hbar}\sqrt{(ct)^2-x^2}\,\right)+
(ct\II+x\sigma_{1})\ \frac{J_1\left(\frac{m\,c}{\hbar}\sqrt{(ct)^2-x^2}\,\right)}{\sqrt{(ct)^2-x^2}}\right)\,,
\]
$J_{k}$ denoting the Bessel function of order $k$.
%with
%\begin{equation}
%\Kcal(t) = K(0,t) 
%=  \frac{imc}{2\hbar}  \begin {bmatrix}
%- J_0\left(\frac{m\,c^2t}{\hbar}\right)+i J_1\left(\frac{m\,c^2 t}{\hbar}\right) & 0  \\  0 & J_0\left(\frac{m\,c^2t}{\hbar}\right)+i J_1\left(\frac{m\,c^2 t}{\hbar}\right) 
%\end {bmatrix}.
%\end{equation}
\end{proposition}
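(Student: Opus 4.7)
The plan is to write the equation in distributional form as in \eqref{GS}, namely $i\hbar\,\partial_t\Psi = D_m\Psi + \hbar A\q(t)\,\delta_y$, and to treat the singular term as a time-dependent source for the free Dirac evolution. Duhamel's principle then gives
\begin{equation*}
\Psi(t) = \Psi^f(t) - i \int_0^t e^{-\frac{i}{\hbar}(t-s)H} \bigl(A\q(s)\bigr)\,\delta_y \, ds,
\end{equation*}
so the problem reduces to evaluating the free propagator on a delta source at $y$ and then determining $\q(t)$ self-consistently.

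The next step is to invoke the explicit form of the 1-D retarded free Dirac kernel (as recalled in Appendix A). It splits into a singular part supported on the light cone $|x-y|=c(t-s)$, inherited from the massless transport equations, plus a smooth part supported in the interior of the cone and expressible through Bessel functions. Convolving with $\delta_y$ and integrating in $s$, the light-cone part collapses onto the instant $s=t-|x-y|/c$ and produces the matrix $\begin{bmatrix} 1 & \sgn(x-y) \\ \sgn(x-y) & 1 \end{bmatrix}$ acting on $A\q\bigl(t-|x-y|/c\bigr)$, while the interior part yields the convolution against $K(x-y,t-s)$; the Heaviside cut-off $\theta(t-|x-y|/c)$ encodes finite speed of propagation. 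This reproduces \eqref{Psilin}.

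To close the representation an equation for $\q$ is needed. Taking the mean value of \eqref{Psilin} at $x=y^{\pm}$ and observing that $\tfrac12(\sgn(0^+)+\sgn(0^-))=0$, the off-diagonal $\sgn$ entries drop out and the light-cone matrix reduces to $\II$, yielding exactly \eqref{qlin}. To solve \eqref{qlin}, I would rewrite it as
\begin{equation*}
\Bigl(\II + \tfrac{i}{2c}A\Bigr)\q(t) = \Psi^f(y,t) - i\int_0^t K(0,t-s)A\q(s)\,ds.
\end{equation*}
Since $A$ is Hermitian, $\tfrac{i}{2c}A$ is anti-Hermitian, its eigenvalues are purely imaginary, so $\II + \tfrac{i}{2c}A$ is invertible. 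The inhomogeneity $\Psi^f(y,\cdot)$ is locally $H^1$ in time by Proposition \ref{H1}, and $K(0,t-s)$ is at most weakly singular via the Bessel functions, so a standard Volterra/contraction argument on short intervals yields existence and uniqueness of $\q\in C([0,\infty);\CO^2)$.

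Finally, I would verify that the $\Psi(t)$ so constructed lies in $D(H_A)$ and solves \eqref{cauchylin}. Since $\Psi^f$ is continuous at $y$ and the interior Bessel term has no jump there, $[\Psi(t)]_y$ comes entirely from the light-cone term, and a direct computation gives
\begin{equation*}
[\Psi(t)]_y = -\tfrac{i}{2c}\left(\begin{bmatrix} 1 & 1 \\ 1 & 1\end{bmatrix} - \begin{bmatrix} 1 & -1 \\ -1 & 1\end{bmatrix}\right) A\q(t) = -\tfrac{i}{c}\,\sigma_1 A\q(t),
\end{equation*}
so $ic\sigma_1[\Psi(t)]_y = \sigma_1^2 A\q(t) = A\q(t)$, matching \eqref{DHAy2}. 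That $\Psi$ satisfies $i\hbar\partial_t\Psi = D_m\Psi$ pointwise away from $y$ follows from $\Psi^f$ being a free solution and the correction term being built from the retarded free kernel, while $\Psi(0)=\Psi_\circ$ is immediate because $\theta(-|x-y|/c)=0$ for $x\ne y$. The step I expect to be most delicate is the precise bookkeeping of the singular (light-cone) and regular (Bessel) pieces of the free Dirac propagator, so that the computed jump matches exactly the self-adjoint boundary condition from \eqref{DHAy2} rather than only up to a constant; once this matching is pinned down, the remaining verifications are essentially routine.
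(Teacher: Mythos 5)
Your proof is correct and follows essentially the same route as the paper: recast \eqref{cauchylin} as the distributional problem with source $\hbar A\q(t)\delta_y$, apply Duhamel with the explicit free propagator \eqref{Ufree} (light-cone part plus Bessel kernel) to obtain \eqref{Psilin}, and take traces at $y^{\pm}$ to obtain \eqref{qlin}. The one genuine addition is your Volterra/contraction argument for solvability of \eqref{qlin} (using invertibility of $\II+\tfrac{i}{2c}A$): the paper instead defines $\q(t)$ directly as the trace of the solution supplied by Stone's theorem and only shows it satisfies \eqref{qlin}, so your argument is a useful complement that justifies the phrase ``the solution of the integral equation'' in the statement.
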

%PROOF
\begin{proof} Recall that $t\mapsto e^{-\frac{i}{\hbar} t H_{A}}$ is a strongly continuous unitary group %such that $ \Psi(t):=e^{-\frac{i}{\hbar} t H_{A}}\Psi_\circ\in D(H_{A})$ for all $\Psi_\circ\in D(H_{A})$. 
and moreover note that the maps on $H^1(\RE\backslash\{y\})$ to $\CO$
\[f \mapsto \lim_{x\to y^\pm} f(x)\equiv f(y^\pm)\,, \qquad f\in H^1(\RE\backslash\{y\})\] 
are continuous. Then the map $t\mapsto \q(t)$, with $\q(t)$ defined as 
\begin{equation}\label{q}
\q(t) := \frac12(\Psi(y^+,t) +\Psi(y^-,t)),
\end{equation} 
is continuous as well. 

The relation \eqref{GS} leads us to consider the distributional Cauchy problem 
\begin{equation}\label{dcp}
\begin{cases}
i\hbar\frac{d}{d t}\Psi(t) =D_{m}\Psi(t)+\hbar A\q(t)\delta_{y}&
\\
\Psi(0)=\Psi_{\circ}\,,
\end{cases}
%\Psi(t)=\begin {bmatrix}\psi_1 \\ \psi_2\end {bmatrix}(t)\,,
%\quad\Psi(0)=\begin {bmatrix}\psi_1^{\circ} \\ \psi_2^{\circ}\end {bmatrix}\,,
\end{equation}
where  $\q(t)$ is defined  by \eqref{q}. 
%Let us consider initially the case $\alpha\in\mathbb{R}$. 
Then 
\[
%\begin{equation}\label{decomposition}
\Psi(t)=\Psi^{f}(t)+\Psi^{\delta}(t)\,,
%\end{equation}
\]
where $\Psi^f(t) = e^{-\frac{i}{\hbar} t H }\Psi_\circ $, and  $\Psi^{\delta}(t)%=\begin {pmatrix}\psi_{1}^{\delta}\\\psi_{2}^{\delta}\end {pmatrix}(t)
$ solves \eqref{dcp} with zero initial conditions. 
By  Duhamel's formula
%\footnote{Here $t>0$; in case $t<0$ the integral  $\int_{0}^{t}$ is replaced by $-\int_{t}^{0}$.}
\[
%\begin{equation}\label{Psidelta}
\Psi^{\delta}(t)=-i\int_0^{t}ds\,e^{-\frac i\hbar (t-s)H}Aq(s)\delta_{y}\,. 
%\end{equation}
\]
Let us notice that, by \eqref{free}, the group of evolution $\exp({-\frac i\hbar tH})$ continuously maps $\S(\RE)\otimes\CO^{2}$ in  $\S(\RE)\otimes\CO^{2}$ and so it extends by duality to a group of evolution (which we denote by the same symbol) on $\S'(\RE)\otimes\CO^{2}$ to  $\S'(\RE)\otimes\CO^{2}$. Using the  definition of the unitary group $e^{-\frac{i}{\hbar}t H}$, see Eq. \eqref{Ufree}, we get 
\[\begin{aligned}
\Psi^\delta(t) =&  -\frac{i}{2} \int_0^t ds\left( (\mathbb{1}+\sigma_1)Aq(s) \, \delta_{y+c(t-s)} + (\mathbb{1}-\sigma_1) Aq(s) \,\delta_{y-c(t-s)}\,
\right) \\ 
&-i \int_0^t ds \int_{-c(t-s)}^{c(t-s)} d\xi \, K(\xi,t-s) Aq(s) \, \delta_{y+\xi}\,  .
\end{aligned}\] 
Exploiting the Dirac-delta distributions in the integrals (recalling that $t\geq0$),  we get 
\begin{equation}\begin{aligned}\label{Pside}
\Psi^\de(x,t)= &   -\frac{i}{2c}\theta\left( t-\frac{|x-y|}c \right) \begin {bmatrix} 1& \sgn(x-y) \\ \sgn(x-y) &1  \end {bmatrix} Aq\left( t-\frac{|x-y|}c \right) \\ &-i\theta\left( t-\frac{|x-y|}c \right)\int_0^{t-\frac{|x-y|}{c}} ds\,K(x-y,t-s) A q(s).
\end{aligned}\end{equation}
Therefore we have proved that the solution of the Cauchy problem \eqref{cauchylin} satisfies the identity \eqref{Psilin} with $\q(t)$ defined by \eqref{q}. 

To prove that $\q(t)$ satisfies the identity \eqref{qlin} we note that,  by equation \eqref{Psilin}, % and by the definition of $\Kcal(t)$,  
one has 
\begin{equation}\label{4.8}
\Psi(y_\pm,t) = \Psi^f(y,t)  -\frac{i}{2c}(\mathbb{1}\pm\sigma_1) Aq(t) -i \int_0^{t} ds\,K(0,t-s) A q(s). 
 \end{equation}
%Then Eq. \eqref{qlin} follows by $((\mathbb{1}+\sigma_1)+(\mathbb{1}-\sigma_1))/2 = \II$. 
\end{proof}

%REMARK
\begin{remark}
Note that from   Eq. \eqref{Psilin} one can see that $\Psi(t)$ satisfies the boundary conditions in \eqref{DHAy2}  for any $t>0$. Indeed, by Eq. \eqref{4.8}, one has that 
 \[
\left[\Psi(t)\right]_{y}= -\frac{i}{2c}((\mathbb{1}+\sigma_1)-(\mathbb{1}-\sigma_1))A\q(t)  = -  \frac{i}{c}\,\sigma_{1}A\q(t).
 \]
\end{remark}

%REMARK
\begin{remark}
Despite the presence of the $\theta$-function at the r.h.s. of Eq. \eqref{Psilin}, the function $\Psi(x,t)$ is continuous in $x = y\pm ct$. In fact, from formula  \eqref{Ufree}, it follows that   for any $t>0$, $\Psi^f(t)$ is discontinuous  in $x=y\pm ct$ and  
\begin{equation}\label{jump1}
[\Psi^f(t)]_{y\pm ct} =\frac12[(\mathbb{1}\pm\sigma_1)\Psi_\circ]_y = \frac12 \begin {pmatrix}
[\psi^\circ_1]_y \pm [\psi^\circ_2]_y \\  \pm [\psi^\circ_1]_y + [\psi^\circ_2]_y
\end {pmatrix}.
%= \frac12\begin {bmatrix}
%[\psi^\circ_1]_y \pm [\psi^\circ_2]_y \\ 
%\pm[\psi^\circ_1]_y+[\psi^\circ_2]_y
%\end {bmatrix}.
\end{equation}
On the other hand, since 
\[\lim_{x\to(y\pm c t)^\mp} \theta \left( t-\frac{|x-y|}c \right) =1 \quad ;\qquad\lim_{x\to(y\pm c t)^\pm} \theta \left( t-\frac{|x-y|}c \right)  =0,
\] one has that 
\[\left[ \theta \left( t-\frac{|x-y|}c \right)\right]_{x=y\pm ct} = \lim_{x\to(y\pm c t)^+} \theta \left( t-\frac{|x-y|}c \right) - \lim_{x\to(y\pm c t)^-} \theta \left( t-\frac{|x-y|}c \right)  = \mp 1.\]
Then, Eq. \eqref{Pside} gives 
\begin{equation}\label{jump2}
[\Psi^\de(t)]_{y\pm ct} = \pm \frac{i}{2c} (\mathbb{1}\pm\sigma_1) Aq(0) =\mp \frac12 (\mathbb{1}\pm\sigma_1) \begin {pmatrix}
[\psi^\circ_2]_y \\ [\psi^\circ_1]_y \end {pmatrix} =  - \frac12 \begin {pmatrix}
[\psi^\circ_1]_y \pm [\psi^\circ_2]_y \\  \pm [\psi^\circ_1]_y + [\psi^\circ_2]_y
\end {pmatrix},\end{equation}
where in the second equality we used the fact that $\Psi_\circ \in D(H_{A})$. Eqs. \eqref{jump1} and \eqref{jump2} give $[\Psi(t)]_{y\pm ct} = [\Psi^f(t)]_{y\pm ct} + [\Psi^\de(t)]_{y\pm ct} =0 $. 
\end{remark}
%REMARK
\begin{remark}\label{r:Psifyt}
From Eq. \eqref{Ufree}, it follows that,  for any $t>0$, $\Psi^f(y,t)$ is continuous in $t$, and 
\[\Psi^f(y,0) \equiv \lim_{t\to0} \Psi^f(y,t) = \frac12\big((\mathbb{1}+\sigma_1)\Psi_\circ(y^-) +(\mathbb{1}-\sigma_1)\Psi_\circ(y^+)\big). \]
This implies that 
\[
\Psi^f(y,0)  -\frac{i}{2c} A\q(0) =  \frac12\big((\mathbb{1}+\sigma_1)\Psi_\circ(y^-) +(\mathbb{1}-\sigma_1)\Psi_\circ(y^+)\big) + \frac12 \,\sigma_{1} [\Psi_\circ]_y = \q(0),
\]
which is in agreement with the fact that $\q(t)$ satisfies Eq. \eqref{qlin}. 
\end{remark}

%%%%%%%%%%%%%%%%%%%%%%%%%%%%%%%%%%%%%%%%
%%%%%%%%%%%%%%%%%%%%%%%%%%%%%%%%%%%%%%%%
%%%%%%%%%%%%%%%%%%%%%%%%%%%%%%%%%%%%%%%%
%%%%%REMARK SUL CASO A MASSA NULLA
%%%%%%%%%%%%%%%%%%%%%%%%%%%%%%%%%%%%%%%%
%%%%%%%%%%%%%%%%%%%%%%%%%%%%%%%%%%%%%%%%
%%%%%%%%%%%%%%%%%%%%%%%%%%%%%%%%%%%%%%%%
%%REMARK 
%\begin{remark} Let us notice that, in case $m=0$, one gets the explicit solution
%\begin{align}
%&\psi_1 (x,t)=\frac{1}{2}\big((\psi^\circ_1+\psi^\circ_2)(x-c\,t)+(\psi^\circ_1-\psi^\circ_2 )(x+c\,t)\big)\\
%-&\frac{\alpha}{2(\alpha-2 ic)}\ \theta\left(t-\frac{|x-y|}c\right)\big((\psi^\circ_1+\psi^\circ_2)\left(y-(ct-|x-y|)\right)
%%\\
%+
%%&
%(\psi^\circ_1-\psi^\circ_2 )\left(y+(ct+|x-y|)\right)\big)\,,
%\end{align}
%\begin{align}
%&\psi_2 (x,t)=\frac{1}{2}\big((\psi^\circ_1+\psi^\circ_2)(x-c\,t)-(\psi^\circ_1-\psi^\circ_2 )(x+c\,t)\big)\\
%-&\frac{\alpha\,\sgn(x-y)}{2(\alpha-2 ic)}\ \theta\left(t-\frac{|x-y|}c\right)\big((\psi^\circ_1+\psi^\circ_2)\left(y-(ct-|x-y|)\right)
%%\\
%-
%%&
%(\psi^\circ_1-\psi^\circ_2 )\left(y+(ct-|x-y|)\right)\big)\,.
%\end{align}
%
%\end{remark}

\section{The Cauchy problem for the Dirac equation with concentrated non-linearity.}
Now we define a Dirac operator $H^{nl}_{A}$ with concentrated nonlinearity such that the coupling between the jump and the mean value of the spinor-function is given by a nonlinear relation. To this aim we define the nonlinear domain 
\begin{equation*}
D(H^{nl}_{A})  = 
\left\{\Psi \in \mathcal{H} :\Psi \in H^{1}(\RE\backslash\{y\}) \otimes \CO^{2},\,ic \,\sigma_{1}[\Psi ]_{y} = A(\q) \q  \right\}, 
\end{equation*}
where $\CO^2\ni \underline z\mapsto A(\underline z)$ is a  matrix-valued  function   such that $A(\underline z)$ is self-adjoint for all $\underline z$; $H^{nl}_{A}$ is then defined as the restriction of $H^{*}_{\circ}=H_{-}\oplus H_{+}$ to $D(H^{nl}_{A})$, so that
\be\label{Hnl}
H^{nl}_{A}:D(H^{nl}_{A})\subset L^{2}(\RE)\otimes\CO^{2}\to L^{2}(\RE)\otimes\CO^{2}\,,
\quad H^{nl}_{A}\Psi=D_{m}\Psi+\hbar A(\q)\q\,\delta_{y}\,.
\ee
 \begin{remark} We use the notation $H^{nl}_{A}$ just for convenience, indeed the nonlinear operator $H^{nl}_{A}$ depends on the function $\z\mapsto A(\z)\z$ and not only on $A(\z)$: there could be two different matrices $A_{1}(\z)$ and $A_{2}(\z)$ such that $A_{1}(\z)\z=A_{2}(\z)\z$.
 Clearly $H^{nl}_{A}=H_{A}$ whenever $A$ is $\underline z$-independent. 
 \end{remark}
 In order to solve the nonlinear Cauchy problem
\be\label{cauchynonlin}
\begin{cases}
i\hbar\frac{d}{d t}\Psi(t)=H^{nl}_{A}\Psi(t)&\\
\Psi(0)=\Psi_{\circ}\,,&
\end{cases}
\ee
we mimic the Dirac flow in the representation formula  given in Prop. \ref{p:cauchylin}. 

Take $\Psi_\circ \in D(H^{nl}_{A})$ and set $\Psi^f (t) = e^{-\frac{i}{\hbar} t H} \Psi_\circ $. For  $t\geq0$, the non-linear Dirac flow $U_{t}^{nl}$ %with initial data $\Psi_\circ$ 
is defined by $U^{nl}_{t}\Psi_\circ:=\Psi(t)$, where  
\begin{equation}\label{Psinl}\begin{aligned}
\Psi(x,t):= & \Psi^{f}(x,t)  -\frac{i}{2c}\theta\left( t-\frac{|x-y|}c \right) \begin {bmatrix} 1& \sgn(x-y) \\ \sgn(x-y) &1  \end {bmatrix} \left(A(\q) \q\right)\left( t-\frac{|x-y|}c \right) \\ &-i\theta\left( t-\frac{|x-y|}c \right)\int_0^{t-\frac{|x-y|}{c}} ds\,K(x-y,t-s) (A(\q) \q)(s)\,,
\end{aligned} \end{equation} 
where $\q(t)$ is the solution of the nonlinear integral equation 
\begin{equation}\label{qnl}
\q(t) = \Psi^f(y,t) -\frac{i}{2c}\,( A(\q) \q)(t)- i \int_0^t ds\,K(0,t-s) (A(\q) \q)(s) ,\end{equation}
and  $(A(\q)\q)(t)$ is a shorthand notation for  $A(\q(t))\q(t)$.

The first step in order to prove the well-posedness  of the problem \eqref{cauchynonlin} is to show that, for any $T>0$, Eq. \eqref{qnl} admits a unique (sufficiently regular) solution for $t\in[0,T]$.  This is achieved in   Lemma \ref{l:qwp} below. 
\par
In the proof of Lemma \ref{l:qwp}, we need the map 
\[
%\begin{equation}\label{F} 
F_{A}:\CO^2 \to \CO^2\,,\quad
F_{A}(\underline z) : = \underline z + \frac{i}{2c}\, A(\underline z)\underline z
%\end{equation}
\]
to be locally bi-Lipschitz continuous. Therefore, we make the following assumption on the matrix valued function  $A(\underline z)$:
\begin{assumption}\label{a:F}
The map $\z\mapsto A(\z)$ from $\CO^2$ to the space  of $2\times2$ self-adjoint matrices is such that  $F_{A}$ is $C^1$-diffeomorphism as a map on $\RE^{4}$ to itself.
\end{assumption}
By Hadamard's  global inverse function theorem (see e.g. \cite{Gordon:1972kh, Gordon:1973iq} and references therein), a $C^1$ map $\Phi:\RE^N \to \RE^N$ is a $C^{1}$-diffeomorphism if and only its Jacobian determinant never vanishes and $\|\Phi(\underline x)\|\to+\infty$ as $\|\underline x\|\to \infty$.  Since the complex map $F_{A}$ can be equivalently seen as a map from $\RE^4$ to $\RE^4$,  such a global inverse function theorem applies to $F_{A}$ as well. By $|F_{A}(\underline z)|^2 = |\underline z|^2+ |A(\underline z)\underline z|^2/(4c^2)$, it follows  $|F_{A}(\underline z)|\to+\infty$ whenever $|\underline z|\to+\infty$. Hence, Assumption \ref{a:F} is equivalent to
\vskip8pt\noindent
{\bf Assumption \theassumption'}
The map $\z\mapsto A(\z)$ from $\CO^2$ to the space of $2\times2$ self-adjoint matrices %, such that $|z|\to\infty$ implies  $|A(z)z|\to\infty$,  
is such that  $F_{A}$ is $C^1(\RE^{4},\RE^{4})$ and its Jacobian determinant never vanishes.
\vskip8pt
We are now ready to state our first results that concerns the well-posedness of the equation for $\q(t)$. 
%%%%%%%
%LEMMA
%%%%%%%
\begin{lemma}\label{l:qwp} Let $A(\underline z)$ be such that Assumption \ref{a:F} is satisfied. Then for any $\Psi_\circ\in D(H^{nl}_{A})$ and  $T>0$ there exists a unique solution $\q\in H^{1}(0,T)\otimes \CO^2$ of Eq. \eqref{qnl}.%\footnote{$q(t)$\`e pi\`u regolare di cos\`i, sistemare la dimostrazione}
\end{lemma}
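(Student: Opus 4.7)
\emph{Approach.} The plan is to remove the implicit algebraic coupling in~\eqref{qnl} by the substitution $u(t):=F_A(\q(t))$, converting the equation into an explicit Volterra-type integral equation to which a fixed point argument applies. By Assumption~\ref{a:F}, $F_A$ is a global $C^1$-diffeomorphism of $\CO^2\simeq\RE^4$, so setting $\q=F_A^{-1}(u)$, Eq.~\eqref{qnl} is equivalent to
\begin{equation*}
u(t) = \Psi^f(y,t) - i\int_0^t K(0,t-s)\,G(u(s))\,ds,\qquad G(u):=A(F_A^{-1}(u))F_A^{-1}(u).
\end{equation*}
The identity $A(\z)\z=-2ic(F_A(\z)-\z)$ gives $G(u)=-2ic(u-F_A^{-1}(u))$, so $G$ is of class $C^1$ on $\CO^2$ and thus locally Lipschitz. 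Moreover, self-adjointness of $A(\z)$ implies $|F_A(\z)|^2=|\z|^2+|A(\z)\z|^2/(4c^2)\ge|\z|^2$, hence $|F_A^{-1}(u)|\le|u|$ and the linear bound $|G(u)|\le 4c|u|$.

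\emph{Global continuous solution.} Since $\Psi^f(y,\cdot)$ is continuous on $[0,T]$ by Remark~\ref{r:Psifyt} and $K(0,\cdot)$ is bounded (being a smooth combination of $J_0$ and $J_1$ with real argument), a Banach fixed point argument on a small closed ball in $C([0,T_0];\CO^2)$ around the data $\Psi^f(y,\cdot)$ produces a unique local solution $u$. The linear estimate $|G(u)|\le 4c|u|$ combined with Gronwall's inequality applied to $|u(t)|\le|\Psi^f(y,t)|+4c\,\|K(0,\cdot)\|_\infty\int_0^t|u(s)|\,ds$ yields an a priori bound $|u(t)|\le C_T$ on every interval of existence contained in $[0,T]$. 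A standard continuation argument then extends the local solution uniquely to the whole $u\in C([0,T];\CO^2)$, and consequently $\q=F_A^{-1}\circ u\in C([0,T];\CO^2)$ is uniquely determined.

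\emph{Upgrading to $H^1$.} By Proposition~\ref{H1}, $\Psi^f(y,\cdot)\in H^1(0,T)\otimes\CO^2$. Since $K(0,\cdot)$ and $\partial_\tau K(0,\cdot)$ are bounded on $[0,T]$, and $G(u(\cdot))\in L^\infty(0,T)$, the Leibniz rule with moving upper endpoint gives
\begin{equation*}
\frac{d}{dt}\int_0^t K(0,t-s)G(u(s))\,ds = K(0,0)G(u(t)) + \int_0^t \partial_t K(0,t-s)G(u(s))\,ds \in L^\infty(0,T),
\end{equation*}
so $u\in H^1(0,T)\otimes\CO^2$. Finally, since $F_A^{-1}$ is $C^1$ on $\CO^2$ and $u$ is bounded, the one-dimensional Sobolev chain rule yields $\q=F_A^{-1}\circ u\in H^1(0,T)\otimes\CO^2$, with $\q'(t) = DF_A^{-1}(u(t))\,u'(t)$ in $L^2$.

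\emph{Main obstacle.} A direct contraction argument applied to $\q$ must control the Lipschitz constants of both the pointwise term $A(\q)\q$ and the convolution simultaneously, but the pointwise contribution cannot be made small by shrinking $T_0$. Assumption~\ref{a:F} is exactly what is needed to invert the pointwise nonlinearity once and for all via $F_A$, reducing the problem to a genuine Volterra equation in $u$ in which only the convolution Lipschitz constant, arbitrarily small as $T_0\to 0$, enters the contraction estimate.
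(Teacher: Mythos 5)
Your proof is correct and follows essentially the same route as the paper's: both arguments use Assumption \ref{a:F} to invert the pointwise term via $F_A^{-1}$, establish an a priori bound by Gr\"onwall, obtain existence and uniqueness by a contraction whose constant is made small by shrinking the time step, and upgrade to $H^1$ via Proposition \ref{H1} together with the $C^1$ regularity of the convolution term and the chain rule for $F_A^{-1}$. Your change of unknown $u=F_A(\q)$, which turns \eqref{qnl} into an explicit Volterra equation, is only a cosmetic reformulation of the paper's fixed-point map $\underline g\mapsto F_A^{-1}(f_k+I_k(\underline g))$, though it does make the continuation argument read more cleanly.
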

%PROOF
\begin{proof} At first we prove that there exists a unique solution $\q\in C[0,T]\otimes \CO^2$. We equivalently show that, for fixed  $T>0$ and $\Psi_\circ\in D(H^{nl}_{A})$,  there exists $\bar t>0$ for which the following holds true: $\forall k\ge 0$ such  that $k\bar t\le T$, suppose that there exists an unique solution $\q^{k}\in C[0,k\bar t\,]\otimes\CO^{2}$ of Eq. \eqref{qnl} (no assumption in the case $k=0$); then Eq. \eqref{qnl} has an unique solution $\q^{k+1}\in C[0,(k+1)\bar t\,]\otimes\CO^{2}$.\par
To begin with, we show that , if $\q(t)$ solves Eq. \eqref{qnl}, then   there exists a positive constant $C_1$ such that \begin{equation}\label{apriori}
%\sup_{t\in[0,T]}|q(t)| \leq C_1.
\sup_{t\in[0,T]}|(A(\q)\q)(t)| \leq C_1. 
\end{equation}
To prove this claim  recall that, by Rem. \ref{r:Psifyt},  $ \sup_{t\in[0,T]}|\Psi^f(y,t)|\leq C$, and that  $\sup_{t\in[0,T]}|K(0,t)| \leq C $ for some positive constant $C$. Hence, by Eq. \eqref{qnl},  and using the fact that $A(\underline z)$ is self-adjoint, we get that for all $t\in [0,T]$ the following inequality  holds true
\[|(A(\q)\q)(t)| \leq 2c \left|\left(\II +\frac{i}{2c}\, A(\q(t))\right)\q(t)\right| \leq C \left(1 +\int_0^t ds\, |(A(\q)\q)(s)|\right).  \] 
Then the bound \eqref{apriori} follows by Gr\"onwall's inequality.
%Hence, by Gr\"onwall's inequality, one has that $\sup_{t\in[0,T]}|(A(q)q)(t)|\leq Ce^{CT}$. Since, by Assumption \ref{a:F}, $|z|\to\infty$ implies $|A(z)z|\to\infty$, the latter bound implies \eqref{apriori}. 

Next, let us pose $t_{k}:=k\bar t$.
%we assume that for some $0< t_k< T$  there exists a unique solution of Eq. \eqref{qnl} in $C((0,t_k),\CO^2)$ and denote it by $q^k$. By a contraction argument, we shall prove that there exists $\bar t>0$, independent on $t_k$, such that  one can extend existence and uniqueness to $C((0,t_k+\bar t),\CO^2)$.  \\
For  $t\in [t_k,t_k+\bar t\,]$, the solution of Eq. \eqref{qnl} satisfies the identity  
\begin{equation}\label{contr0}
\left(\II +\frac{i}{2c} A(\q(t))  \right)\q(t) = \Psi^f(y,t) - i \int_0^{t_k} ds\,K(0,t-s) (A(\q^k) \q^k)(s) - i \int_{t_k}^t ds\,K(0,t-s) (A(\q) \q)(s). \end{equation}
 We set 
 \[
 f_k(t) = \Psi^f(y,t) - i \int_0^{t_k} ds\,K(0,t-s) (A(\q^k) \q^k)(s) \] and\[ I_k(\q)(t) = - i \int_{t_k}^t ds\,K(0,t-s) (A(\q) \q)(s),
 \]
and rewrite Eq. \eqref{contr0} as 
\[
%\be\label{qFI}
\q(t) = F_{A}^{-1}(f_k(t) + I_k(\q)(t)), 
%\ee
\]
where by Assumption \ref{a:F}, $F_{A}^{-1}$ exists and is a $C^1$ map from $\CO^2$ to $\CO^2$. 

Since $\Psi^f(y,t) $  and $K(0,t)$ are bounded, and by the bound \eqref{apriori}, it follows that 
\[\sup_{t\in [0,T]}|f_k(t)|  \leq C(1+T C_1)\equiv R_1.     \]
Let $B_k(R):= \{\underline{g}\in C[t_k,t_k+\bar t\,]\otimes\CO^2:\, \sup_{t\in [t_k,t_k+\bar t\,]}|\underline{g}(t)|\leq R\}$.

 For any $\underline{g}\in B_k(2R_1)$, and $\bar t\leq  t_1 = R_1(C \sup_{|\underline{z}|\leq 2R_1} |A(\underline{z})\underline{z}|)^{-1}$,  independent on $k$, we have that 
 \[\sup_{t\in[t_k,t_k+\bar t\,]}|f_k(t) + I_k(\underline{g})(t) | \leq R_1 + \bar t C \sup_{|\underline{z}|\leq 2R_1} |A(\underline{z})\underline{z}|\leq 2R_1. \]

Define the map 
\[
G_k(\underline{g}) :=  F_{A}^{-1}(f_k + I_k(\underline{g})).
\]
The map $G_k$ is continuous in $B_k(2 R_1)$. The self-adjointness of $A(\underline{z})$, implies that $|F_{A}^{-1}(\underline{z})|\leq |\underline{z}|$, hence for $\bar t \leq t_1$ one has that 
\[\sup_{t\in[t_k,t_k+\bar t\,]}|G_k(\underline{g})(t)| \leq  \sup_{t\in[t_k,t_k+\bar t\,]}|f_k(t) + I_k(\underline{g})(t) | \leq 2R_1 \] which  means that $G_k$ maps $B_k(2 R_1)$ into itself.

By Assumption \ref{a:F}, we have that the maps $F_{A}^{-1}(\underline z)$ and $A(\underline{z})\underline{z}=i2c(\underline z-F_{A}(\underline z))$ are locally Lipschitz. More precisely, for any $\underline{z}_1$ and $\underline{z}_2$ such that $|\underline{z}_1|,|\underline{z}_2|\leq R$, there exist two constants $\kappa_F(R)$ and $\kappa_A(R) $ such that
\[|F_{A}^{-1}(\underline{z}_1)- F_{A}^{-1}(\underline{z}_2)|\leq \kappa_F(R)|\underline{z}_1-\underline{z}_2|\quad \text{and} \quad|A(\underline{z}_1)\underline{z}_1- A(\underline{z}_2)\underline{z}_2|\leq \kappa_A(R)|\underline{z}_1-\underline{z}_2|.\] 
Take $\bar t \leq t_1$. For any  $\underline{g}_1, \underline{g}_2 \in B_k(2R_1)$,  one has that $\sup_{t\in[t_k,t_k+\bar t\,]}|f_k(t) + I_k(\underline{g}_j)(t) | \leq 2R_1$, and  
\[\begin{aligned}
\sup_{t\in[t_k,t_k+\bar t\,]} |G_k(\underline{g}_1(t))- G_k(\underline{g}_2(t))| & \leq \kappa_F(2R_1)\sup_{t\in[t_k,t_k+\bar t\,]}|I_k(\underline{g}_1)(t) - I_k(\underline{g}_2)(t)|  \\ 
& \leq \kappa_F(2R_1)\kappa_A(2R_1)C \bar t  \sup_{t\in[t_k,t_k+\bar t\,]}|\underline{g}_1(t) - \underline{g}_2(t)|. 
\end{aligned}\]
Set $ t_2 =   (2 \kappa_F(2R_1)\kappa_A(2R_1)C)^{-1}$, independent on $k$, and  $\bar t =\min\{t_1,t_2\}$, then  the map $G_k$ is a contraction in $B_k(2R_1)$. By the Banach-Caccioppoli fixed point theorem, this implies that there exists a unique solution $\q^*(t)\in B_k(2R_1)$ of Eq. \eqref{contr0}. 

By construction the function $\q^{k+1}(t)$ which is equal to $\q^k(t)$ for $t\in [0,t_k]$, and to $\q^*(t)$ for $t\in [t_k,t_k + \bar t\,]$,  is indeed in $C[0,t_k+\bar t\,]\otimes\CO^{2}$ and  solves Eq. \eqref{qnl} for $t\in[0,t_k+\bar t\,]$. \par
By 
$$
\q(t)=F_{A}^{-1}\left(\Psi^f(y,t) - I(t) \right)\,,\quad I(t):=i \int_0^t ds\,K(0,t-s) (A(\q) \q)(s)\,,
$$
since $\Psi^f(y,\cdot)\in H^{1}(0,T)\otimes\CO^{2}$ (see Prop. \ref{H1} in Appendix \ref{app:A}), $I\in C^{1}[0,T]\otimes\CO^{2}$ and $F_{A}^{-1}$ is Lipschitz continuous, in conclusion $\q\in H^{1}(0,T)\otimes\CO^{2}$.
%To conclude the proof of the theorem, we set $t_k = k \bar t$, and repeat the argument above for $k =0,1,2,...,k_{max}$, where  $k_{max}$ is the smallest integer such that $k_{max} \bar t>T$. 
 \end{proof}
By the previous results, the proof of global well-posedness of the Cauchy problem  \eqref{cauchynonlin} follows:
\begin{theorem}\label{t:main}
Let $A(\underline z)$ be such that Assumption \ref{a:F} is satisfied. Then for any $\Psi_\circ\in D(H^{nl}_{A})$ the formulae \eqref{Psinl} and \eqref{qnl} provide the unique, global in time, solution $\Psi(t)$ of the Cauchy problem \eqref{cauchynonlin}; more precisely, $\Psi\in C^{1}(\RE_{+},L^{2}(\RE)\otimes \CO^{2})$, $\Psi(t)\in D(H^{nl}_{A})$ and \eqref{cauchynonlin} holds for any $t\ge 0$. 
\end{theorem}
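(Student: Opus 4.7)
The plan is to reduce the nonlinear problem to the linear one with time-dependent forcing, using Lemma \ref{l:qwp} to produce the charge first. Given $\Psi_{\circ}\in D(H^{nl}_{A})$, I apply Lemma \ref{l:qwp} on each interval $[0,T]$ and glue the solutions, obtaining a unique $\q\in H^{1}_{loc}(\RE_{+})\otimes\CO^{2}$ satisfying \eqref{qnl}. Writing $A(\z)\z=2ic(\z-F_{A}(\z))$ and invoking the $C^{1}$-diffeomorphism property in Assumption \ref{a:F}, I deduce that $g(t):=A(\q(t))\q(t)$ also lies in $H^{1}_{loc}(\RE_{+})\otimes\CO^{2}$, hence is continuous in $t$. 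I then define $\Psi(t)$ via formula \eqref{Psinl}, and it remains to check: (i) $\Psi(t)\in D(H^{nl}_{A})$ for every $t\geq 0$; (ii) $\Psi\in C^{1}(\RE_{+},L^{2}(\RE)\otimes\CO^{2})$ solves \eqref{cauchynonlin}; (iii) uniqueness.

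For (i), the one-sided trace computation of the linear case \eqref{4.8}, with $A\q$ replaced by $g(t)$, yields
\[
\Psi(y^{\pm},t)=\Psi^{f}(y,t)-\frac{i}{2c}(\II\pm\sigma_{1})g(t)-i\int_{0}^{t}ds\,K(0,t-s)g(s).
\]
Subtraction gives $[\Psi(t)]_{y}=-(i/c)\sigma_{1}g(t)$, whence $ic\,\sigma_{1}[\Psi(t)]_{y}=g(t)=A(\q(t))\q(t)$, which is the required nonlinear boundary condition; averaging and using \eqref{qnl} shows that the mean value of $\Psi(t)$ at $y$ is exactly the previously constructed $\q(t)$, so no circularity arises. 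The $H^{1}(\RE\setminus\{y\})$-regularity and the cancellation of potential jumps across the light cones $x=y\pm ct$ follow as in the remarks after Prop. \ref{p:cauchylin}, the hypothesis $\Psi_{\circ}\in D(H^{nl}_{A})$ being used precisely to ensure such cancellation.

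For (ii), I read \eqref{Psinl} as Duhamel's formula for the \emph{linear} inhomogeneous equation $i\hbar\partial_{t}\Psi=D_{m}\Psi+\hbar g(t)\delta_{y}$ with datum $\Psi_{\circ}$ and source $g\in H^{1}_{loc}$; the distributional computation in the proof of Prop. \ref{p:cauchylin} then shows this PDE holds in $\S'(\RE)\otimes\CO^{2}$, and the boundary identity from (i) converts it into $i\hbar\partial_{t}\Psi(t)=H^{nl}_{A}\Psi(t)$. Since the right-hand side lies in $L^{2}(\RE)\otimes\CO^{2}$ and is continuous in $t$ (because both $g$ and $\Psi^{f}$ are norm-continuous), $\Psi$ is upgraded to $C^{1}(\RE_{+},L^{2}(\RE)\otimes\CO^{2})$. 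For (iii), if $\widetilde\Psi$ is any solution in the stated regularity class, its charge $\widetilde\q(t):=\tfrac12(\widetilde\Psi(y^{+},t)+\widetilde\Psi(y^{-},t))$ is continuous, and the same linear Duhamel argument applied to $i\hbar\partial_{t}\widetilde\Psi=D_{m}\widetilde\Psi+\hbar A(\widetilde\q)\widetilde\q\,\delta_{y}$ forces $\widetilde\q$ to satisfy \eqref{qnl}; Lemma \ref{l:qwp} then gives $\widetilde\q=\q$, and \eqref{Psinl} gives $\widetilde\Psi=\Psi$. The main technical obstacle is precisely the $C^{1}$-in-time regularity: because $\Psi_{\circ}$ has a nonzero jump at $y$, the free flow $\Psi^{f}$ is not itself differentiable in $L^{2}$, and the desired regularity emerges only after summation with the Duhamel term, by the same light-cone jump cancellation that keeps $\Psi(t)$ inside $D(H^{nl}_{A})$ at every positive time.
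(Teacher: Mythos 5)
Your overall architecture coincides with the paper's: solve the charge equation via Lemma \ref{l:qwp}, define $\Psi(t)$ by \eqref{Psinl}, verify the nonlinear boundary condition from the one-sided traces as in \eqref{4.8}, and recover uniqueness from the uniqueness of the charge. Points (i) and (iii) of your argument are essentially the paper's. The problem is point (ii), and specifically the $C^{1}(\RE_{+},L^{2}(\RE)\otimes\CO^{2})$ regularity, which you yourself flag as ``the main technical obstacle'' but then do not actually overcome. The assertion that the right-hand side ``lies in $L^{2}$ and is continuous in $t$ because both $g$ and $\Psi^{f}$ are norm-continuous'' is not a proof: the right-hand side of the equation is $H^{nl}_{A}\Psi(t)=H\Phi(t)$, where $\Phi(t)$ is the regular part of $\Psi(t)$, and its $L^{2}$-norm involves a spatial derivative of $\Psi(t)$; norm-continuity of $\Psi^{f}(t)$ and of $g(t)$ gives no control on that. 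Likewise, the ``light-cone jump cancellation'' only guarantees that $\Psi(\cdot,t)\in H^{1}(\RE\backslash\{y\})\otimes\CO^{2}$ at each fixed $t$; it says nothing about differentiability of $t\mapsto\Psi(t)$ in $L^{2}$.

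What the paper actually does to close this gap is the bulk of its proof. One decomposes $\Psi(t)=\Phi(t)+G\underline\xi(t)$ with $\underline\xi(t)=\hbar(A(\q)\q)(t)$, which belongs to $H^{1}(0,T)\otimes\CO^{2}$ by the $H^{1}$ part of Lemma \ref{l:qwp} (itself resting on Prop. \ref{H1}); then $\Phi$ solves the regular Cauchy problem $i\hbar\dot\Phi=H\Phi-i\hbar G\underline{\dot\xi}$ with datum $\Phi_{\circ}\in H^{1}(\RE)\otimes\CO^{2}=D(H)$, and the whole issue reduces to proving that
\[
t\mapsto \Upsilon(t)=\frac{i}{\hbar}\int_{0}^{t}ds\,e^{-\frac{i}{\hbar}(t-s)H}\underline{\dot\xi}(s)\delta_{y}
\]
is continuous with values in $L^{2}(\RE)\otimes\CO^{2}$. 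This is done by writing $\Upsilon$ explicitly as a light-cone term plus a Bessel-kernel term and estimating $\|\Upsilon(t)-\Upsilon(s)\|$ by $L^{2}$-moduli of continuity of $\underline{\dot\xi}$, using the boundedness of $K$ and the strong continuity of translations on $L^{2}$. None of this appears in your proposal, and without it (or an equivalent argument) the claim $\Psi\in C^{1}(\RE_{+},L^{2}(\RE)\otimes\CO^{2})$ is unsupported. Note also that this is exactly where the $H^{1}$-in-time regularity of the charge is consumed; your proposal derives that regularity but never actually uses it.
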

\begin{proof} By \eqref{Hnl} and by the same reasonings as in the linear case provided in Section \ref{s:lin} (replacing $A\q$ with $A(\q)\q$), one has that $\Psi(t)$ given in formula \eqref{Psinl} 
solves the distributional Cauchy problem 
\[
\begin{cases}
i\hbar\frac{d}{d t}\Psi(t) =D_{m}\Psi(t)+\hbar (A(\q)\q)(t)\,\delta_{y}&
\\
\Psi(0)=\Psi_{\circ}\,,
\end{cases}
\]and, since $\q(t)$ solves \eqref{qnl}, one gets $\Psi(t)\in D(H^{nl}_{A})$ for any $t\ge0$. Therefore, to conclude the proof we need to show that the map $t\mapsto \Psi(t)$ belongs to $C^{1}(\RE_{+},L^{2}(\RE)\otimes \CO^{2})$. Since $\Psi(t)\in D(H^{nl}_{A})\subset H^{1}(\RE\backslash\{y\})$, we have the decomposition (in the following we suppose $m>0$; similar considerations hold in the $m=0$ case)
\begin{equation}\label{dec}
\Psi(t)=\Phi(t)+G\underline\xi(t)\,,\quad\Phi(t)\in H^{1}(\RE)\otimes\CO^{2}\,,\quad\underline\xi(t)=i\hbar c\,\sigma_{1}[\Psi(t)]_{y}\,.
\end{equation}
Let us notice that, since \begin{equation}\label{xiq}\underline\xi(t)=\hbar (A(\q)\q)(t),\end{equation} and $\underline z\mapsto A(\underline z)\underline z$ is Lipschitz continuous,  $t\mapsto \underline\xi(t)$ belongs to $ H^{1}(0,T)\otimes\CO^{2}$ for any $T>0$ by Lemma \ref{l:qwp}. Moreover, since 
\begin{equation}\label{HAnl}
H^{nl}_{A}\Psi=D_{m}(\Phi+G\underline\xi(t))+\underline\xi(t)\,\delta_{y}=H\Phi\,,
\end{equation} one has that $\Phi(t)$ solves the Cauchy problem
\begin{equation}\label{Creg}
\begin{cases}
i\hbar\frac{d}{d t}\Phi(t)=H\Phi(t)-i\hbar\, G\underline{\dot\xi}(t)&\\
\Phi(0)=\Phi_\circ\,,&
\end{cases}
\end{equation}
with $\Phi_\circ:=\Psi_{\circ}-G\underline{\xi}(0)\in H^{1}(\RE)\otimes\CO^{2}$. Therefore 
\[
%\be\label{psizero}
\Psi(t)=e^{-\frac {i}{\hbar}tH}\Phi_\circ-\int_{0}^{t}ds\,e^{-\frac {i}{\hbar}(t-s)H}G\underline{\dot\xi}(s)+G\underline{\xi}(t)\,.
%\ee
\]
Since $t\mapsto e^{-\frac {i}{\hbar}tH}\Phi_\circ$ belongs to $C^{1}(\RE_{+},L^{2}(\RE)\otimes \CO^{2})$ and $-HG\underline{\dot\xi}=\underline{\dot\xi}\delta_{y}$, to conclude we need to show that the map 
\[
t\mapsto \Upsilon(t):=\frac{d}{d t}\left(\int_{0}^{t}ds\,e^{-\frac {i}{\hbar}(t-s)H}G\underline{\dot\xi}(s)-G\underline{\xi}(t)\right)
=
\frac{i}{\hbar}\int_{0}^{t}ds\,e^{-\frac {i}{\hbar}(t-s)H}\underline{\dot\xi}(s)\delta_{y}
\]
belongs to $C(\RE_{+},L^{2}(\RE)\otimes \CO^{2})$. 
By the same calculations that led to \eqref{Pside}, one gets  
\[
\Upsilon(t)=\frac{i}{\hbar}\left(\frac{1}{2c}\,\Upsilon_{1}(t)+\Upsilon_{2}(t)\right)\,,
\] 
where
\[
\Upsilon_{1}(x,t)= \theta\left( t-\frac{|x-y|}c \right) \begin {bmatrix} 1& \sgn(x-y) \\ \sgn(x-y) &1  \end {bmatrix} \underline{\dot\xi}\left( t-\frac{|x-y|}c \right) \,,
\]
\[ 
\Upsilon_{2}(x,t)=\theta\left( t-\frac{|x-y|}c \right)\int_0^{t-\frac{|x-y|}{c}} d\tau\,K(x-y,t-\tau) \underline{\dot\xi}(\tau)\,.
\]
Let $\kappa$ denote the bound for the kernel $K$:
\[\kappa := \max_{i,j = 1,2} \sup_{0<t<t_\circ, |x|< ct} |K_{i,j}(x,t)|. \]
One has (supposing $0\le s<t\le t_{\circ}$, the same kind of reasonings hold in the case $0\le t<s\le t_{\circ}$ )
\begin{align*}
\|\Upsilon_{1}(t)-\Upsilon_{1}(s)\|^{2}\leq&C\left(\int_{0}^{cs}dx\,\left|\underline{\dot\xi}\left(t-\frac{x}c\right)-\underline{\dot\xi}\left(s-\frac{x}c\right)\right|^{2}+\int_{cs}^{ct}dx\,\left|\underline{\dot\xi}\left(t-\frac{x}c\right)\right|^{2} \right)\\
\le&C\left(\int_{0}^{t_{\circ}}dx\,\left|\underline{\dot\xi}\left(t-s+x\right)-\underline{\dot\xi}\left(x\right)\right|^{2}+\int_{0}^{t-s}dx\,\left|\underline{\dot\xi}(x)\right|^{2} \right) 
\end{align*}
and %(here $\kappa$ denotes a bound for the kernel $K$)
\begin{align*}
&\|\Upsilon_{2}(t)-\Upsilon_{2}(s)\|^{2}\\
\leq & 2\int_{|x|\leq cs}dx\left|
\int_{0}^{t-|x|/c}d\tau\, K(x,t-\tau) \underline{\dot\xi}(\tau) - \int_0^{s-|x|/c}d\tau\,K(x,s-\tau)\underline{\dot\xi}(\tau)\right|^{2}\\
&+2\int_{cs\leq|x|\leq ct}dx\left|\int_{0}^{t-|x|/c}d\tau\,K(x,t-\tau)\underline{\dot\xi}(\tau)\right|^{2}
\\
\leq & 4\int_{|x|\leq cs}dx\left(
\int_{|x|/c}^{s}d\tau\, \left|  K(x,\tau)\left(  \underline{\dot\xi}(t-\tau) - \underline{\dot\xi}(s-\tau)\right) \right|\right)^{2} 
\\
&+
4\int_{|x|\leq cs}dx\left(
\int_{s}^{t}d\tau\, \left| K(x,\tau) \underline{\dot\xi}(t-\tau) \right|\right)^{2}
\\
&+2\int_{cs\leq|x|\leq ct}dx\left(\int_{0}^{t-|x|/c}d\tau\,\left| K(x,t-\tau)\underline{\dot\xi}(\tau) \right|\right)^{2}
\\
\leq & 16\kappa^2 \int_{|x|\leq cs}dx\left(
\int_{|x|/c}^{s}d\tau\, \left|   \underline{\dot\xi}(t-\tau) - \underline{\dot\xi}(s-\tau) \right|\right)^{2} +
16\kappa^2 \int_{|x|\leq cs}dx\left(
\int_{s}^{t}d\tau\, \left| \underline{\dot\xi}(t-\tau) \right|\right)^{2}
\\
&+8\kappa^2\int_{cs\leq|x|\leq ct}dx\left(\int_{0}^{t-|x|/c}d\tau\,\left|\underline{\dot\xi}(\tau) \right|\right)^{2}
\\
\leq & C\int_{0}^{s}d\tau\, \left|   \underline{\dot\xi}(t-\tau) - \underline{\dot\xi}(s-\tau) \right|^2 +C(t-s) \|\underline{\dot\xi}\|^{2}_{H^{1}(0,t_{\circ})}
\\
\leq & C\int_{0}^{t_\circ}d\tau\, \left|   \underline{\dot\xi}(t-s+\tau) - \underline{\dot\xi}(\tau) \right|^2 +C(t-s) \|\underline{\dot\xi}\|^{2}_{H^{1}(0,t_{\circ})}
%%%%=&2\int_{0}^{cs}dx\left|
%%%%\int_{0}^{t-x/c}d\tau\,(K(x,t-\tau)-K(x,s-\tau))\underline{\dot\xi}(\tau)+\int_{s-x/c}^{t-x/c}d\tau\,K(x,t-\tau)\underline{\dot\xi}(\tau)\right|^{2}\\
%%%%&+2\int_{cs}^{ct}dx\left|\int_{0}^{t-x/c}d\tau\,K(x,t-\tau)\underline{\dot\xi}(\tau)\right|^{2}
%%%%\\
%%%\le&16\int_{0}^{ct_{\circ}}dx\left(\left(
%%%\int_{0}^{t_{\circ}}d\tau\,|(K(x,t-\tau)-K(x,s-\tau))\underline{\dot\xi}(\tau)|\right)^{2}+\left(\int_{s-x/c}^{t-x/c}d\tau\,|K(x,t-\tau)\underline{\dot\xi}(\tau)|\right)^{2}\right)\\
%%%&+2\int_{cs}^{ct}dx\left(\int_{0}^{t-x/c}d\tau\,|K(x,t-\tau)\underline{\dot\xi}(\tau)|\right)^{2}
%%%\\
%%%\le&4c\left(\int_{0}^{t_{\circ}}\!\!\!\!\int_{0}^{t_{\circ}}dxd\tau\,|(K(cx,t-\tau)-K(cx,s-\tau))|^{2}+\int_{0}^{t_{\circ}}dx\int_{s-x}^{t-x}d\tau\,|K(cx,t-\tau)|^{2}\right)\|\underline\xi\|^{2}_{H^{1}(0,t_{\circ})}\\
%%%&+2c\int_{s}^{t}\!\!\!\int_{0}^{t_{\circ}}dxd\tau\,|K(cx,t-\tau)|^{2}\,\|\underline\xi\|^{2}_{H^{1}(0,t_{\circ})}
%%%\\
%%%\le&4c\left(\int_{0}^{t_{\circ}}\!\!\!\!\int_{0}^{t_{\circ}}dxd\tau\,|(K(cx,t-\tau)-K(cx,s-\tau))|^{2}+\frac32\,\kappa^{2}t_{\circ}(t-s)\right)\|\underline\xi\|^{2}_{H^{1}(0,t_{\circ})}\,.
\end{align*}
Since 
\[
\lim_{s\to t}\int_{0}^{t_{\circ}}dx\,\left|\underline{\dot\xi}\left(t-s+x\right)-\underline{\dot\xi}\left(x\right)\right|^{2}=0
\]
(by $\underline\xi\in H^{1}(0,T)$ and by the continuity of the shift operator, see e.g. \cite[page 11]{Sobolev:2010wz}) and 
%\[
%\lim_{s\to t}\int_{0}^{t_{\circ}}\!\!\!\!\int_{0}^{t_{\circ}}dxd\tau\,|(K(cx,t-\tau)-K(cx,s-\tau))|^{2}=0
%\]
%(by the continuity of the kernel $K$), in conclusion 
\[\|\Upsilon(t)-\Upsilon(s)\| \leq C \left( \|\Upsilon_1(t)-\Upsilon_1(s)\|+\|\Upsilon_2(t)-\Upsilon_2(s)\|\right)\]
we conclude 
\[
\lim_{s\to t}\|\Upsilon(t)-\Upsilon(s)\|=0\,.
\]
\end{proof}
\begin{remark}\label{equiv} The Dirac differential operator  $D_{m}$ has many different equivalent representations: given any unitary map $U:\CO^{2}\to\CO^{2}$, one defines an equivalent Dirac operator by $\tilde D_{m}:=(\II\otimes U^{*})D_{m}(\II\otimes U)$, i.e. 
$$
\tilde D_{m}\Psi=-i\hbar\,c\,\tilde\sigma_{1}\frac{d\Psi}{dx}+m\, c^{2}\tilde\sigma_{3}\Psi\,,
\qquad \tilde \sigma_{k}:=U^{*}\sigma_{k}U\,.
$$ 
The relation between the corresponding non-linear operators is 
$$
(\II\otimes U^{*})H^{nl}_{A}(\II\otimes U)\Psi=\tilde H^{nl}_{\tilde  A}\Psi:=\tilde  D_{m}\Psi+\hbar\tilde  A(\q)\q\,\delta_{y}\,,\qquad\tilde  A(\z):=U^{*}A(U\z)U\,.
$$ 
Since
\[
F_{\tilde A}(\z)=U^{*}\left(\II+\frac{i}{2c}\, A(U\z)\right)U\z=U^{*} F_{A}(U\z)\,,
\]
$F_{A}$ satisfies Assumption \ref{a:F} if and only if $F_{\tilde A}$ does. This shows that our global well-posedness  result holds in any representation and the Assumption \ref{a:F} is an invariant one.
\end{remark}

%%%%%%%%%%%%%%%%%%%%%%
%MASS CONSERVATION
%%%%%%%%%%%%%%%%%%%%%%
\begin{theorem}[Mass conservation]\label{t:mass}Let $\Psi_\circ \in D(H_A^{nl})$, then the $L^2$-norm is conserved along the flow associated to the  Cauchy problem \eqref{cauchynonlin}. 
\end{theorem}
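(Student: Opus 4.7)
The plan is to differentiate $t\mapsto \|\Psi(t)\|^{2}$ along the flow and to show the derivative vanishes identically. By Theorem~\ref{t:main} we have $\Psi\in C^{1}(\RE_{+},L^{2}(\RE)\otimes\CO^{2})$, $\Psi(t)\in D(H^{nl}_{A})$ and $i\hbar\,\dot\Psi(t)=H^{nl}_{A}\Psi(t)$, so
\[
\frac{d}{dt}\|\Psi(t)\|^{2}=2\Re\langle\Psi(t),\dot\Psi(t)\rangle=\frac{2}{\hbar}\,\Im\langle\Psi(t),H^{nl}_{A}\Psi(t)\rangle.
\]
Hence it suffices to prove that $\langle\Psi,H^{nl}_{A}\Psi\rangle\in\RE$ for every $\Psi\in D(H^{nl}_{A})$.

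Since $H^{nl}_{A}$ acts on its domain as the restriction of $H_{-}\oplus H_{+}$ (the $\delta_{y}$ contribution is only the distributional expression, cf. \eqref{Hnl} and \eqref{HAnl}), in the $L^{2}$-pairing we may write
\[
\langle\Psi,H^{nl}_{A}\Psi\rangle=mc^{2}\langle\Psi,\sigma_{3}\Psi\rangle-i\hbar c\int_{\RE\setminus\{y\}}\Psi^{*}\sigma_{1}\Psi'\,dx.
\]
The first summand is real because $\sigma_{3}$ is Hermitian. For the second I would integrate by parts separately on $(-\infty,y)$ and $(y,+\infty)$, which is legitimate since $\Psi\in H^{1}(\RE\setminus\{y\})\otimes\CO^{2}$. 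The boundary contributions at $\pm\infty$ vanish and the only surviving terms come from the traces $\Psi(y^{\pm})$; a short computation then gives
\[
2i\,\Im\!\left(-i\hbar c\int_{\RE\setminus\{y\}}\Psi^{*}\sigma_{1}\Psi'\,dx\right)=i\hbar c\bigl[\Psi^{*}\sigma_{1}\Psi\bigr]_{y},
\]
where $[\,\cdot\,]_{y}$ denotes the jump at $y$.

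The final step exploits the nonlinear boundary condition. Writing $\Psi(y^{\pm})=\q\pm\tfrac12[\Psi]_{y}$ and using that $\sigma_{1}$ is Hermitian, one finds
\[
\bigl[\Psi^{*}\sigma_{1}\Psi\bigr]_{y}=2\,\Re\langle\q,\sigma_{1}[\Psi]_{y}\rangle_{\CO^{2}}.
\]
By the boundary condition $ic\,\sigma_{1}[\Psi]_{y}=A(\q)\q$, this becomes $-\tfrac{2}{c}\,\Im\langle\q,A(\q)\q\rangle_{\CO^{2}}$, which vanishes because $A(\q)$ is self-adjoint; thus $\Im\langle\Psi,H^{nl}_{A}\Psi\rangle=0$, as required.

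I do not expect a real obstacle here: given the regularity statement of Theorem~\ref{t:main}, the argument reduces to a standard integration by parts across the singular point combined with the algebraic identity above. The only mildly delicate point is the simultaneous handling of the antilinearity convention for the inner product and of the factor $i$ in the interface condition, which together make the self-adjointness of $A(\q)$ exactly the property needed to kill the boundary term.
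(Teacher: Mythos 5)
Your proof is correct, but it takes a genuinely different route from the paper's. The paper differentiates $\|\Psi(t)\|^{2}$ after decomposing $\Psi(t)=\Phi(t)+G\underline\xi(t)$ as in \eqref{dec}: it uses the regularized Cauchy problem \eqref{Creg} to write $\dot\Psi=-\tfrac{i}{\hbar}H\Phi$, and the pairing $\langle H\Phi,G\underline\xi\rangle$ is evaluated through $-D_{m}G=\delta_{y}\otimes\II$ and the boundary condition in the form \eqref{dominio}, so that the whole boundary contribution appears as $\langle\q,(A+\tfrac{1}{2c}A\sigma_{3}A)\q\rangle_{\CO^{2}}$, whose imaginary part vanishes by self-adjointness of $A(\q)$. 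You instead bypass $G$, $\Phi$ and \eqref{Creg} altogether and perform the classical Lagrange--Green identity for $D_{m}$ on the two half-lines, reducing everything to the jump of $\Psi^{*}\sigma_{1}\Psi$ at $y$ and then to $\Im\langle\q,A(\q)\q\rangle_{\CO^{2}}=0$ via the interface condition $ic\,\sigma_{1}[\Psi]_{y}=A(\q)\q$. Your version is more elementary and makes transparent that mass conservation is exactly the formal symmetry of the nonlinear boundary condition; the paper's version has the advantage of recycling the decomposition \eqref{dec} and the computation \eqref{where}, which are reused essentially verbatim in the energy-conservation proof (Theorem \ref{t:energy}). Both arguments rest on the same two ingredients supplied by Theorem \ref{t:main}: $\Psi\in C^{1}(\RE_{+},L^{2}(\RE)\otimes\CO^{2})$ and $\Psi(t)\in D(H^{nl}_{A})$ for all $t$, which you correctly invoke. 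One cosmetic remark: with the paper's convention (inner product antilinear in the first slot) one gets $2\Re\langle\q,\sigma_{1}[\Psi]_{y}\rangle_{\CO^{2}}=+\tfrac{2}{c}\,\Im\langle\q,A(\q)\q\rangle_{\CO^{2}}$, not $-\tfrac{2}{c}$ as you wrote; the sign is immaterial since this quantity vanishes in any case.
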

\begin{proof}
We take the derivative 
\[\frac{d}{dt} \|\Psi(t)\|^2 = 2 \Re\left\langle \dot \Psi(t) , \Psi(t)\right\rangle. \]
We write  $\Psi(t)$ as in Eq. \eqref{dec} and  use Eq. \eqref{Creg}  to get 
%\[
\begin{equation}\label{boh}
\left\langle \dot \Psi(t) , \Psi(t)\right\rangle = 
\frac{i}{\hbar}\left\langle H \Phi(t) , \Phi(t)\right\rangle + 
\frac{i}{\hbar} \left\langle  H \Phi(t) , G\underline\xi(t) \right\rangle. 
\end{equation}
%\]
Since  $-D_{m}G=\delta_{y}\otimes\II$ we have that 
%\[
\begin{equation}\label{where}
\frac{i}{\hbar} \left\langle  H \Phi(t) , G\underline\xi(t) \right\rangle = -  \frac{i}{\hbar} \left\langle   \Phi(y,t) ,\underline\xi(t) \right\rangle_{\CO^2} = - i   \left\langle  \q(t)  ,\left(\left(A(\bar\q,\q) +\frac{A(\bar\q,\q)\sigma_3A(\bar\q,\q)}{2c}\right)\q\right)(t) \right\rangle_{\CO^2},
\end{equation}
%\]
where we used Eq. \eqref{xiq}, the boundary condition in \eqref{dominio}, and the fact that $A$ is selfadjoint. Using the latter identity in Eq.  \eqref{boh}, and noticing that $\Im \left\langle H \Phi , \Phi\right\rangle = 0$ and $\Im\left\langle  \q  ,\left(A +\frac{A\sigma_3A}{2c}\right)\q \right\rangle_{\CO^2} = 0$, we conclude that $ \Re\left\langle \dot \Psi(t) , \Psi(t)\right\rangle =0$, which in turn implies that the $L^2$-norm is conserved. 
\end{proof}
%%%%%%%%%%%%%%%%%%%%%%
%MASS CONSERVATION END 
%%%%%%%%%%%%%%%%%%%%%%

%%%%%%%%%%%%%%%%%%%%%%
%CONSERVATION LAWS
%%%%%%%%%%%%%%%%%%%%%%
To state the conservation of the energy we look at $H^{nl}_A$ as an Hamiltonian vector field with respect to the couple of canonical coordinates $\Psi$ and $\overline\Psi$. For this reason, in the next theorem we use the notation $A(\q) = \cA(\bar\q,\q)$. 
\begin{theorem}[Energy conservation]\label{t:energy} Assume that $\cA(\bar \q,\q) = \cA(\q,\bar \q)$,  and let $\Psi_\circ \in D(H_\cA^{nl})$. Then the energy  
%\[
\begin{equation}\label{E}
E(\Psi) = \left\langle\Psi,H_\cA^{nl} \Psi\right\rangle  -  \hbar \left\langle\q,\cA(\bar\q,\q)  \q \right\rangle_{\CO^2} + \hbar W(\bar{\underline q}, \underline q) ,
%E(\Psi) = \left\langle\Phi,H \Phi\right\rangle  -  2\hbar \left\langle\q,\left(\cA(\bar\q,\q) + \frac{\cA(\bar\q,\q)\sigma_{3}\cA(\bar\q,\q)}{4c}\right) \q \right\rangle_{\CO^2} + W(\bar{\underline q}, \underline q) ,
\end{equation}
%\]
where $W:\CO^4 \to \RE$ is such that $W(\bar\q,\q)=W(\q,\bar\q)$, and 
%\[
\begin{equation}\label{W}
\nabla_{\bar{\underline q}}W(\bar{\underline q} , \underline q) =  \cA(\bar\q,\q)\q   ,
\end{equation}
%\]
is conserved along the flow associated to the  Cauchy problem \eqref{cauchynonlin}. 
\end{theorem}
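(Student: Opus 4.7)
The strategy mirrors the mass-conservation argument. I would use the decomposition $\Psi(t)=\Phi(t)+G\underline\xi(t)$ from \eqref{dec} with $\underline\xi(t)=\hbar\cA(\bar\q,\q)\q(t)$, exploit $H^{nl}_{\cA}\Psi=H\Phi$ and the nonhomogeneous Cauchy problem \eqref{Creg} satisfied by $\Phi$, and aim to show that the three summands in
\[
E(\Psi(t))=\langle\Psi,H\Phi\rangle-\hbar\langle\q,\cA\q\rangle_{\CO^2}+\hbar W(\bar\q,\q)
\]
have derivatives summing to zero. All boundary terms will be reduced to the integration-by-parts identity $\langle H\Phi,G\underline\eta\rangle=-\langle\Phi(y),\underline\eta\rangle_{\CO^2}$ (valid for every $\underline\eta\in\CO^2$) already used in \eqref{where}, together with the formula $\Phi(y)=\q+\frac{1}{2c}\sigma_3\cA\q$ obtained from \eqref{Gmedxi}.

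For the quadratic piece I split $\langle\Psi,H\Phi\rangle=\langle\Phi,H\Phi\rangle+\langle G\underline\xi,H\Phi\rangle$. From \eqref{Creg} one gets $\frac{d}{dt}\langle\Phi,H\Phi\rangle=-2\Re\langle G\dot{\underline\xi},H\Phi\rangle=2\Re\langle\dot{\underline\xi},\Phi(y)\rangle_{\CO^2}$ (the $\frac{i}{\hbar}\|H\Phi\|^2$ contribution being purely imaginary), while the identity above yields the explicit formula $\langle G\underline\xi,H\Phi\rangle=-\hbar\langle\q,\cA\q\rangle_{\CO^2}-\frac{\hbar}{2c}\langle\q,\cA\sigma_3\cA\q\rangle_{\CO^2}$, which I differentiate term by term. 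Substituting $\underline\xi=\hbar\cA\q$ and the formula for $\Phi(y)$, the $\cA\sigma_3\cA$ contributions cancel between the two pieces (via $\frac{d}{dt}\langle\cA\q,\sigma_3\cA\q\rangle_{\CO^2}=2\Re\langle\dot{(\cA\q)},\sigma_3\cA\q\rangle_{\CO^2}$), and the surviving terms, after using $\cA^*=\cA$ and $\dot\cA^*=\dot\cA$ to combine real parts, consolidate into $\frac{d}{dt}\langle\Psi,H\Phi\rangle=\hbar\langle\q,\dot\cA\q\rangle_{\CO^2}$.

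For the remaining summands, the reality of $W$---ensured by the hypothesis $\cA(\bar\q,\q)=\cA(\q,\bar\q)$ together with \eqref{W}---gives $\nabla_\q W=\overline{\cA\q}$ and hence $\dot W=\langle\dot\q,\cA\q\rangle_{\CO^2}+\overline{\langle\dot\q,\cA\q\rangle_{\CO^2}}=2\Re\langle\dot\q,\cA\q\rangle_{\CO^2}$, while the product rule yields $\frac{d}{dt}\langle\q,\cA\q\rangle_{\CO^2}=2\Re\langle\dot\q,\cA\q\rangle_{\CO^2}+\langle\q,\dot\cA\q\rangle_{\CO^2}$ (with the last term real since $\dot\cA=\dot\cA^*$). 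Summing,
\[
\dot E=\hbar\langle\q,\dot\cA\q\rangle_{\CO^2}-\hbar\bigl(2\Re\langle\dot\q,\cA\q\rangle_{\CO^2}+\langle\q,\dot\cA\q\rangle_{\CO^2}\bigr)+2\hbar\Re\langle\dot\q,\cA\q\rangle_{\CO^2}=0.
\]
The main technical obstacle, as in Theorem \ref{t:mass}, is justifying the differentiation of $\langle\Phi(t),H\Phi(t)\rangle$ rigorously despite $\dot\Phi$ being only in $L^2$; this is handled by the $H^1(0,T)$ time-regularity of $\underline\xi$ furnished by Lemma \ref{l:qwp} together with the Duhamel representation of $\Phi$ stemming from \eqref{Creg}, which allows the above manipulations to be carried out in the quadratic-form sense for $H$ on $H^1(\RE)\otimes\CO^2$.
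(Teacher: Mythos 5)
Your argument is correct and rests on the same pillars as the paper's proof: the decomposition $\Psi=\Phi+G\underline\xi$ of \eqref{dec}, the identity $\langle G\underline\xi,H\Phi\rangle=-\langle\Phi(y),\underline\xi\rangle_{\CO^2}$ from \eqref{where}, the formula $\frac{d}{dt}\langle\Phi,H\Phi\rangle=2\Re\langle\dot{\underline\xi},\Phi(y)\rangle_{\CO^2}$ derived from \eqref{Creg}, and $\frac{d}{dt}W=2\Re\langle\dot\q,\cA(\bar\q,\q)\q\rangle_{\CO^2}$. Where you differ is in the final bookkeeping, and your version is genuinely leaner. The paper rewrites $E$ entirely in terms of $\Phi$ and $\q$, and must then convert every occurrence of $\langle\q,\dot\cA\q\rangle_{\CO^2}$ and $\Re\langle\q,\dot\cA\sigma_3\cA\q\rangle_{\CO^2}$ into $\nabla_{\bar\q}$-gradients of the corresponding quadratic forms via the identity $\overline{\nabla_{\q}\cA_{i,j}}=\nabla_{\bar\q}\cA_{j,i}$ --- the one place where the hypothesis $\cA(\bar\q,\q)=\cA(\q,\bar\q)$ enters its computation. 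You instead differentiate $\langle G\underline\xi,H\Phi\rangle=-\hbar\langle\q,(\cA+\tfrac{1}{2c}\cA\sigma_3\cA)\q\rangle_{\CO^2}$ directly in $t$; its $\cA\sigma_3\cA$ part cancels the $\sigma_3$ contribution of $2\Re\langle\dot{\underline\xi},\Phi(y)\rangle_{\CO^2}$ exactly (both equal $\pm\tfrac{\hbar}{2c}\tfrac{d}{dt}\langle\q,\cA\sigma_3\cA\q\rangle_{\CO^2}$), leaving $\tfrac{d}{dt}\langle\Psi,H^{nl}_{\cA}\Psi\rangle=\hbar\langle\q,\dot\cA\q\rangle_{\CO^2}$, which then cancels against the derivatives of the other two summands with no differential identities for $\cA$ required. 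I have checked this algebra and it is right. The one imprecision is your remark that the hypothesis $\cA(\bar\q,\q)=\cA(\q,\bar\q)$ ``ensures'' the reality of $W$: in the statement $W$ is already posited to be real-valued and symmetric, and the hypothesis on $\cA$ is better viewed as the compatibility condition under which such a potential with $\nabla_{\bar\q}W=\cA(\bar\q,\q)\q$ can exist at all; since your computation uses only the assumed properties of $W$, this does not affect correctness.
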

\begin{proof}
As first step we rewrite the energy functional in a different form. Recall that $\Psi \in D(H_\cA^{nl})$ can  be  decomposed as in Eq. \eqref{dec}. By Eq. \eqref{HAnl} it follows that  
\[\left\langle\Psi,H_\cA^{nl} \Psi\right\rangle = \left\langle\Phi,H \Phi\right\rangle + \left\langle G\underline \xi,H \Phi\right\rangle.
\] 
Repeating the calculations in Eq. \eqref{where} one obtains  
\[ \left\langle G\underline\xi  ,  H \Phi  \right\rangle = - \left\langle   \Phi(y) ,\underline\xi \right\rangle_{\CO^2} = - \hbar  \left\langle  \q  ,\left(\cA(\bar\q,\q) +\frac{\cA(\bar\q,\q)\sigma_3\cA(\bar\q,\q)}{2c}\right)\q \right\rangle_{\CO^2}. 
\]
Hence, for any state $\Psi\in D(H_\cA^{nl})$,  the energy functional can be written as 
\[E(\Psi) = \left\langle\Phi,H \Phi\right\rangle  -  2\hbar \left\langle\q,\left(\cA(\bar\q,\q) + \frac{\cA(\bar\q,\q)\sigma_{3}\cA(\bar\q,\q)}{4c}\right) \q \right\rangle_{\CO^2} +\hbar W(\bar{\underline q}, \underline q) .
\]
Next we compute the time derivative of the $E(\Psi(t))$  when  $\Psi(t)$ is  the solution of the Cauchy problem \eqref{cauchynonlin}. By using again the decomposition in Eq. \eqref{dec}, we have that
\[\begin{aligned}\frac{d}{dt}  & \langle\Phi(t),H \Phi(t)\rangle \\
&  = \lim_{s\to 0} \frac1s\left(   \langle\Phi(t+s),H \Phi(t+s)\rangle  -    \langle\Phi(t),H \Phi(t)\rangle \right) \\ 
 &  = \lim_{s\to 0}  \frac1s\left(      \langle\Phi(t+s),H \Phi(t+s)\rangle  -    \langle\Phi(t),H \Phi(t+s)\rangle +   \langle\Phi(t),H \Phi(t+s)\rangle  -    \langle\Phi(t),H \Phi(t)\rangle  \right) \\ 
 &= 2\Re \langle\dot \Phi(t),H \Phi(t)\rangle, \end{aligned}\]
where we used the fact that $\Phi(t)$ is in $D(H)$ for all $t\geq0$, is a  continuous  function of $t$,  and that $H$ is selfadjoint. 

 In what follows, to shorten the notation, we do not make explicit the dependence of functions on $t$. We have that 
\[%\frac{d}{dt}   \langle\Phi,H \Phi\rangle  = 
2\Re \langle\dot \Phi,H \Phi\rangle = 2\Re\left\langle\dot{\underline\xi},\Phi(y) \right\rangle_{\CO^2},  \]
where we used the fact that $\dot \Phi$ satisfies the equation in \eqref{Creg}, and that $-D_{m}G=\delta_{y}\otimes\II$. From the relations \eqref{xiq} and \eqref{Gmedxi} we have that 
\[\Phi(y) = \q  +\frac{\sigma_{3}\underline\xi}{2\hbar c} = \left(\II+\frac{\sigma_{3}\cA(\bar\q,\q)}{2c}\,\right )\q .\]
Hence, 
\[\begin{aligned}&\frac{d}{dt}   \left\langle\Phi,H \Phi\right\rangle \\ 
   =& 
2\hbar \Re\left\langle\frac{d}{dt}\left(\cA(\bar\q,\q)\q\right),   \left(\II+\frac{\sigma_{3}\cA(\bar\q,\q)}{2c}\,\right )\q\right\rangle_{\CO^2}  \\ 
= &
2\hbar \left\langle\q, \dot \cA(\bar\q,\q)\q\right\rangle_{\CO^2} + 2\hbar \Re\left\langle\dot\q, \cA(\bar\q,\q)\q\right\rangle_{\CO^2}+\frac{\hbar}{c} \Re\left\langle\q,\dot \cA(\bar\q,\q)\sigma_{3}\cA(\bar\q,\q)\q\right\rangle_{\CO^2}\\
& + \frac{\hbar}{c} \Re\left\langle\dot \q,   \cA(\bar\q,\q)\sigma_{3}\cA(\bar\q,\q)\q\right\rangle_{\CO^2}
 \end{aligned}\]
where we used the fact that $\cA(\bar\q,\q)$ and $\dot \cA(\bar\q,\q)$ are selfadjoint. We note that 
\[\begin{aligned}
\left\langle\q,\dot \cA(\bar\q,\q)\q\right\rangle_{\CO^2} = &\left\langle\q,\left(\dot{\bar\q}\cdot \nabla_{\bar\q}\cA(\bar\q,\q)\right)\q\right\rangle_{\CO^2}  + 
 \left\langle\q,\left(\dot{\q}\cdot \nabla_{\q}\cA(\bar\q,\q)\right)\q\right\rangle_{\CO^2} \\ 
 =& 2\Re\left\langle\q,\left(\dot{\bar\q}\cdot \nabla_{\bar\q}\cA(\bar\q,\q)\right)\q\right\rangle_{\CO^2},
\end{aligned}\]
%$\overline{\nabla_q \cA_{i,j}(\bar \q,\q)} = \nabla_{\bar \q }\cA_{j,i}(\bar \q,\q)$, for $i,j=1,2$,  
where we used $\overline{\nabla_q \cA_{i,j}(\bar \q,\q)} = \nabla_{\bar \q }\cA_{j,i}(\bar \q,\q)$, which is a consequence of the assumption $\cA(\bar \q,\q) = \cA(\q,\bar \q)$ and of the fact that $\cA$ is selfadjoint. 
From the latter identity, it follows that  
\[
\left\langle\q,\dot \cA(\bar\q,\q)\q\right\rangle_{\CO^2} = 2\Re\left( \left\langle\dot\q,\nabla_{\bar\q}\langle\q,\cA(\bar\q,\q)\q\rangle_{\CO^2}\right\rangle_{\CO^2} - \left\langle\dot\q, \cA(\bar\q,\q)\q\right\rangle_{\CO^2} \right).
\]
In a similar way, by using the identity $\overline{\langle\q,(\nabla_{\q}\cA)\sigma_3 \cA q\rangle_{\CO^2}} = \langle\q,\cA\sigma_3(\nabla_{\bar\q}\cA)q\rangle_{\CO^2}$, one can prove that   
\[ \Re\left\langle\q,\dot \cA(\bar\q,\q)\sigma_{3}\cA(\bar\q,\q)\q\right\rangle_{\CO^2} = \Re\left(\left\langle\dot \q,\nabla_{\bar \q} \langle\q,\cA(\bar\q,\q)\sigma_{3}\cA(\bar\q,\q)\q\rangle_{\CO^2} \right\rangle_{\CO^2}  - \langle\dot \q,\cA(\bar\q,\q)\sigma_{3}\cA(\bar\q,\q)\q \rangle_{\CO^2}\right).
\]
Hence, 
\[ \frac{d}{dt}   \langle\Phi,H \Phi\rangle  = 4\hbar \Re \left\langle\dot \q, \nabla_{\bar\q}\left\langle\q,\left(\cA(\bar\q,\q)+ \frac{\cA(\bar\q,\q)\sigma_{3}\cA(\bar\q,\q)}{4c}\right) \q \right\rangle_{\CO^2}    \right\rangle_{\CO^2} - 2 \hbar \Re \left\langle\dot \q, \cA(\bar\q,\q)\q   \right\rangle_{\CO^2}.\] 
Taking into account the fact that 
\[ \begin{aligned}  & 2\hbar \frac{d}{dt} \left\langle\q,\left(\cA(\bar\q,\q) + \frac{\cA(\bar\q,\q)\sigma_{3}\cA(\bar\q,\q)}{4c}\right) \q \right\rangle_{\CO^2}  \\ 
 = &  4\hbar \Re \left\langle\dot \q, \nabla_{\bar\q}\left\langle\q,\left(\cA(\bar\q,\q)+ \frac{\cA(\bar\q,\q)\sigma_{3}\cA(\bar\q,\q)}{4c}\right) \q \right\rangle_{\CO^2}  \right\rangle_{\CO^2}
 ,
 \end{aligned}\]
and that 
\[\frac{d}{dt} W(\bar{\underline q}, \underline q) = 2 \Re \langle\dot\q, \nabla_{\bar q}W(\bar{\underline q}, \underline q)\rangle_{\CO^2} ,  \]together with Eqs. \eqref{E} and \eqref{W}, 
it follows that $\frac{d}{dt} E[\Psi] = 0$. 
\end{proof}
%%%%%%%%%%%%%%%%%%%%%%
%CONSERVATION LAWS END 
%%%%%%%%%%%%%%%%%%%%%%

%%%%%%
%SECTION
%%%%%%
\section{Examples.} 
Given $M:\CO^{2}\to\CO^{2}$ self-adjoint, let $\phi(\z):=\langle\z,M\z\rangle_{\CO^{2}}$ be the corresponding quadratic form. We suppose that 
\[
A(\z)= A_{\circ}(\phi(\z))\,,
\] with 
$A_{\circ}$ such that  
\be\label{h1}
A_{\circ}(x)MA_{\circ}(x)=a(x)M\,,\quad a:\RE\to\RE\,.
\ee
Therefore   $\phi(A(\z)\z)=a(\phi(\z))\phi(\z)$ and 
\[
\phi(F_{A}(\z))=\phi(\z)+\frac1{4c^{2}}\,\phi(A(\z)\z)=f_{a}(\phi(\z))\,,\qquad f_{a}(x):=\left(1+\frac{1}{4c^{2}}\,a(x)\right)x\,.
\]
So, %whenever $x\mapsto A_{\circ}(x)$ is continuous, $x\mapsto xA_{\circ}(x)$ is $C^{1}$ 
if $x\mapsto a(x)x$ is $C^{1}$ and
\be\label{h2}
\lim_{|x|\to+\infty}\left|1+\frac{1}{4c^{2}}\,a(x)\right|\,|x|=+\infty\,,
\ee
\be\label{h3}
1+\frac{1}{4c^{2}}\,\frac{d\ }{dx}\,(a(x)x)\not=0\,,
\ee
then $f_{a}:\RE\to\RE$ is a $C^{1}$-diffeomorphism by Hadamard's theorem and $F_{A}$ has a global inverse given by  
\[
F_{A}^{-1}(\z)=\left(\II+\frac{i}{2c}\,A_{\circ}\big(f^{-1}_{a}(\phi(\z))\big)\right)^{-1}\z\,.
\]
Therefore $F_{A}^{-1}$ is a $C^{1}$-diffeomorphism (and hence Assumption \ref{a:F} holds) whenever $x\mapsto A_{\circ}(x)$ is a $C^{1}$ map such that \eqref{h2} and \eqref{h3} hold. Let us notice that, as the next example shows, Assumption \ref{a:F} can hold true under weaker conditions. 
\subsection{Nonlinear Gesztesy-\v Seba models.}
The two simplest models are the ones in which $M=M_{\pm}:=\frac12(\II\pm\sigma_{3})$, i.e. either $\phi(\z)=\phi_{+}(\z):=|z_{1}|^{2}$ or $\phi(\z)=\phi_{-}(\z):=|z_{2}|^{2}$. These give nonlinear versions of the models introduced in \cite{Gesztesy:1987gn}. One has that \eqref{h1} holds if and only if 
\[
A_{\circ}(x)=\alpha(x)\,M_{\pm}\,,\qquad\alpha:\RE\to\RE\,,\qquad a=\alpha^{2}\,.
\]
By straightforward computation one gets that the Jacobian determinant of $F_{A}$ never vanishes if and only if 
$$
1+\frac1{4c^{2}}\,\frac{d\ }{dx}(\alpha^{2}(x)x)\not=0\,.
$$ 
So, for example,  Assumption \ref{a:F} holds true whenever $\alpha(x)=\kappa\, x^{2\sigma}$, $\kappa\in\RE$ and $\sigma\in\RE_{+}$; the case $\sigma\in (0,1/2)$ shows that 
Assumption \ref{a:F} can be true even if $x\mapsto A_{\circ}(x)$ is not a $C^{1}$ map.\par 
Let us notice that, in the linear case, the 
nonrelativistic limit of the ``$+$'' case gives a Schr\"odinger operator with a delta interaction of strength $\alpha$, whereas  the 
nonrelativistic limit of the ``$-$'' case gives a Schr\"odinger operator with a delta prime  interaction of strength $-1/\alpha$ (see \cite{Benvegnu:1994ci}).
\subsection{Example.} If $M=\II$, i.e. $\phi(\z)=|\z|^{2}$, then \eqref{h1} holds if and only if either
\[
A_{\circ}(x)=\alpha(x)\,\II\,,\qquad \alpha:\RE\to\RE\,,\qquad a=\alpha^{2}\,,
\]
or
\[
A_{\circ}(x)=\begin{bmatrix}\alpha(x)&\ \,\,\gamma(x)\\
\bar\gamma(x)&-\alpha(x)\end{bmatrix},\quad \alpha:\RE\to\RE\,,\quad \gamma:\RE\to\CO\,,\quad a=\alpha^{2}+|\gamma|^{2}\,.
\]
%\eqref{h2} holds and so Assumption \ref{a:F} holds whenever \eqref{h3} holds.
As a special case,  taking $A(\z)=3\beta |\z|^{2}\II$, $\beta\in\RE$, i.e. $W(\z)=\frac32\,\beta\,|\z|^{4}$, one obtains  the ``concentrated'' version of the ``Brag Resonance model, see \cite{Pelinovsky:2010vh}. There a different representation of the Dirac operator is used; with reference to Remark \ref{equiv}, it corresponds to the choice 
\[
U=\frac{1}{\sqrt{2}}
\left[\begin{matrix}1 & -1 \\1 & \ 1\end{matrix}\right]\,,
\] 
so that 
\[
\tilde D_{m}=-i\hbar c\, \sigma_{3}\frac{d}{dx}-mc^{2}\sigma_{1}\,.
\]  
In such a representation the corresponding potential is given by
\[
\tilde W(\z)=\beta \big(|\z|^{4}+2\,|z_{1}|^{2}|z_{2}|^{2}\big)\,.
\]
 
\subsection{Example.} If $M=\sigma_{1}$, i.e. $\phi(\z)=z_{1}\bar z_{2}+\bar z_{1}z_{2}$, then \eqref{h1} holds if and only if either
\[
A_{\circ}(x)=\gamma(x)\,\sigma_{1}\,,\quad \gamma:\RE\to\RE\,,\qquad a=\gamma^{2}\,,
\]
or
\[
A_{\circ}(x)=\begin{bmatrix}\alpha_{1}(x)&\gamma(x)\\
\bar\gamma(x)&\alpha_{2}(x)\end{bmatrix},\quad \alpha_{1}:\RE\to\RE\,,\ \alpha_{2}:\RE\to\RE\,,\ \gamma:\RE\to i\RE\,,\quad a=\alpha_{1}\alpha_{2}-|\gamma|^{2}\,.
\]
\subsection{Example.} If $M=\sigma_{2}$, i.e. $\phi(\z)=i(z_{1}\bar z_{2}-\bar z_{1}z_{2})$, then \eqref{h1} holds if and only if either 
\[
A_{\circ}(x)=\gamma(x)\,\sigma_{2}\,,\qquad \gamma:\RE\to\RE\,,\qquad a=\gamma^{2}\,,
\]
or
\[
A_{\circ}(x)=\begin{bmatrix}\alpha_{1}(x)&\gamma(x)\\
\gamma(x)&\alpha_{2}(x)\end{bmatrix}\,,\quad \alpha_{1}:\RE\to\RE\,,\ \alpha_{2}:\RE\to\RE\,,\ \gamma:\RE\to\RE\,,\quad
a=\alpha_{1}\alpha_{2}-\gamma^{2}\,.
\]
\subsection{Soler-type Models.} If $M=\sigma_{3}$, i.e. $\phi(\z)=|z_{1}|^{2}-|z_{2}|^{2}$, then \eqref{h1} holds if and only if either
\[
A_{\circ}(x)=\alpha(x)\,\sigma_{3}\,,\qquad \alpha:\RE\to\RE\,,\qquad
a=\alpha^{2}\,,
\]
or
\[
A_{\circ}(x)=\begin{bmatrix}\alpha(x)&\gamma(x)\\
\bar\gamma(x)&\alpha(x)\end{bmatrix}\,,\quad \alpha:\RE\to\RE\,,\quad \gamma:\RE\to\CO\,,\quad a=\alpha^{2}-|\gamma|^{2}\,.
\]
As a special case,  taking $A(\z)=4\left( |z_{1}|^{2}-|z_{2}|^{2}\right)\sigma_{3}$, i.e $W(\z)=2\left( |z_{1}|^{2}-|z_{2}|^{2}\right)^{2}$, one obtains  the ``concentrated'' version of the ``massive Gross-Neveu model'' (see \cite{Gross:1974gs, Pelinovsky:2010vh}) which corresponds to the 1-D Soler Model (see \cite{Soler:1970}). Notice that, with respect to the representation $\tilde D_{m}$ of the Dirac operator used in \cite{Pelinovsky:2010vh}, the potential is given by 
$\tilde W(\z)=2(\overline z_{1}z_{2}+z_{1}\overline z_{2})^{2}$.  
\appendix

\section{A regularity result for the free evolution. \label{app:A}}
%\section{The Cauchy problem for the free Dirac equation \label{app:A}}
Here we recall the basic results for the Cauchy problem for the 1-D free Dirac operator 
\be\label{cauchyfree}
\begin{cases}
i\hbar\frac{d}{d t}\Psi^{f}(t)=D_{m}\Psi^{f}(t)&\\
\Psi^{f}(0)=\Psi_{\circ}\,.&
\end{cases}
\ee
By 
\[
\left(-\frac{i}{\hbar}\,D_{m}\right)^{2}=K_{m}:=c^{2}\frac{d^{2}}{dx^{2}}-\frac{m^{2}c^{4}}{\hbar^{2}}\,,
\]
such a Cauchy problem is equivalent to 
\[
%\be\label{KGfree}
\begin{cases}
\frac{d^{2}}{d t^{2}}\Psi^{f}(t)=K_{m}\Psi^{f}(t)&\\
\Psi^{f}(0)=\Psi_{\circ}&\\
\frac{d}{d t}\Psi^{f}(0)=-\frac{i}{\hbar}D_{m}\Psi_{\circ}\,.&
\end{cases}
%\ee
\]
The solution of the Cauchy problem for the Klein-Gordon equation is known (see e.g. \cite[Section II.5.4]{Tikhonov:1963ti},  \cite[Section 4.1.3-3]{Polyanin:2015tx})
\[
%\label{free1}
\begin{split} 
&\Psi^{f}(x,t)=\frac{1}{2}\,(\Psi_{\circ}(x-ct)+\Psi_\circ(x+ct))\\
&-\frac{mc^{2}t}{2\hbar}\int_{x-ct}^{x+ct}d\xi\ \frac{J_1\left(\frac{mc}{\hbar}\,\sqrt{(ct)^2-(x-\xi)^2}\,\right)}{\sqrt{(ct)^2-(x-\xi)^2}}\ 
\Psi_\circ(\xi)\\
&+\frac{1}{2c}\int_{x-ct}^{x+ct}d\xi\,J_0\left(\frac{mc}{\hbar}\,\sqrt{(ct)^2-(x-\xi)^2}\,\right)\left(-\frac{i}{\hbar}D_{m}\Psi_{\circ}\right)
\end{split}
\]
%\be\label{free1}
%\begin{split} 
%&\psi_k^f(x,t)=\frac{1}{2}\,(\psi_k^{\circ}(x-ct)+\psi_k^\circ(x+ct))\\
%&-\frac{mc^{2}t}{2\hbar}\int_{x-ct}^{x+ct}\frac{J_1\left(\frac{mc}{\hbar}\,\sqrt{(ct)^2-(x-\xi)^2}\,\right)}{\sqrt{(ct)^2-(x-\xi)^2}}\ 
%\psi_k^\circ(\xi)\,d\xi\\
%&+\frac{1}{2c}\int_{x-ct}^{x+ct}J_0\left(\frac{mc}{\hbar}\sqrt{(ct)^2-(x-\xi)^2}\,\right)\,
%\left(-c\,\frac{d}{d \xi}\,\psi_j^\circ(\xi)+i(-1)^{k}\,\frac{m\,c^2}{\hbar}\,\psi_k^\circ(\xi)\right)\,d\xi
%\end{split}
%\ee
Posing
\[
\Psi^{f}(t)=\begin {pmatrix}\psi_1^f(t)\\\psi_2^f(t)\end {pmatrix}\,,\quad
\Psi_{\circ}=\begin {pmatrix}\psi_1^\circ\\\psi_2^\circ\end {pmatrix}\,,
\]
integrating by parts and by $\frac{d\,}{dx}J_{0}(x)=-J_{1}(x)$, the solution can be rewritten in an equivalent way as
\be\label{free}
\begin{split}
&\psi_k^f(x,t)=\frac{1}{2}\left((\psi_k^\circ+\psi_j^\circ)(x-ct)+(\psi_k^\circ-\psi_j^\circ)(x+ct)\right)\\
&-\frac{m\,c}{2\,\hbar}\int_{x-ct}^{x+ct}\left(ct\ \frac{J_1\left(\frac{m\,c}{\hbar}\sqrt{(ct)^2-(x-\xi)^2}\,\right)}{\sqrt{(ct)^2-(x-\xi)^2}}-i(-1)^{k}J_0\left(\frac{m\,c}{\hbar}\sqrt{(ct)^2-(x-\xi)^2}\,\right)\right)\psi_k^\circ(\xi)\,d\xi\\
&-\frac{mc}{2\,\hbar}\int_{x-ct}^{x+ct}\frac{J_1\left(\frac{m\,c}{\hbar}\sqrt{(ct)^2-(x-\xi)^2}\,\right)}{\sqrt{(ct)^2-(x-\xi)^2}}\ (x-\xi)\psi_j^\circ(\xi)\,d\xi,
\qquad j,k =1,2\,;\quad j\neq k. 
\end{split}
\ee
Therefore, defining the matrix-valued kernel function 
%\begin{equation}\label{kernel1}
%K(x,t) 
%= \begin {bmatrix}
%K_{1,1}(x,t)& K_{1,2}(x,t) \\  K_{2,1}(x,t) & K_{2,2}(x,t) 
%\end {bmatrix}
%\end{equation}
%with 
\[
%\begin{equation}\label{kernel2}
%\begin{aligned}
%K_{j,k}(x,t) = \frac{imc}{2\hbar} \Bigg( &
%\Bigg((-1)^k J_0\left(\frac{m\,c}{\hbar}\sqrt{(ct)^2-x^2}\,\right)+ict\ \frac{J_1\left(\frac{m\,c}{\hbar}\sqrt{(ct)^2-x^2}\,\right)}{\sqrt{(ct)^2-x^2}}\Bigg)\delta_{j,k} \\ 
%& + ix\ \frac{J_1\left(\frac{m\,c}{\hbar}\sqrt{(ct)^2-x^2}\,\right)}{\sqrt{(ct)^2-x^2}} (1-\delta_{j,k})\Bigg)
%\end{aligned}
K(x,t) =- \frac{mc}{2\hbar} \left( i\sigma_{3}J_0\left(\frac{m\,c}{\hbar}\sqrt{(ct)^2-x^2}\,\right)+
(ct\II+x\sigma_{1})\ \frac{J_1\left(\frac{m\,c}{\hbar}\sqrt{(ct)^2-x^2}\,\right)}{\sqrt{(ct)^2-x^2}}\right)
%\end{equation}
\]
the solution of the Cauchy problem \eqref{cauchyfree} can be written as 
\begin{align}
\Psi^f(x,t)  = & \left(e^{-\frac i\hbar tH} \Psi_\circ\right)(x)   \nonumber \\ 
= & \frac12\left((\mathbb{1}+\sigma_1)\Psi_\circ(x-ct) + (\mathbb{1}-\sigma_1) \Psi_\circ(x+ct) \right)
+ \int_{x-ct}^{x+ct}  d\xi\,K(x-\xi,t)  \Psi_\circ(\xi) \,.\label{Ufree}
\end{align}
%%%%%%%%%%%%%%%%%%%%%
% \section{A regularity result for the free evolution. \label{app:B}}

In the following proposition we establish the regularity properties of the map $t\mapsto\Psi^f(y,t)$.
\begin{proposition}\label{H1}
For any $\Psi_\circ\in H^{1}(\RE\backslash\{y\})\otimes\CO$ and  $T>0$, $\Psi^f(y,\cdot) \in H^1(0,T)$.
\end{proposition}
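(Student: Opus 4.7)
The starting point is formula \eqref{Ufree} evaluated at $x=y$, which after the substitution $\eta=\xi-y$ reads
\[
\Psi^f(y,t) = \tfrac{1}{2}\bigl((\mathbb{1}+\sigma_1)\Psi_\circ(y-ct) + (\mathbb{1}-\sigma_1)\Psi_\circ(y+ct)\bigr) + I(t),
\qquad I(t):=\int_{-ct}^{ct} K(-\eta,t)\,\Psi_\circ(y+\eta)\,d\eta.
\]
The plan is to verify separately that each of the three summands belongs to $H^1(0,T)$.

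For the two transport contributions, note that for $t\in(0,T)$ the arguments $y\mp ct$ stay inside $(-\infty,y)$ or $(y,+\infty)$, where $\Psi_\circ$ is of class $H^{1}$. A direct change-of-variables computation under the affine diffeomorphism $t\mapsto y\mp ct$ gives
\[
\|\Psi_\circ(y\mp c\cdot)\|_{H^1(0,T)}^2 \leq \tfrac{1}{c}\,\|\Psi_\circ\|_{L^2(\RE)}^2 + c\,\|\Psi_\circ'\|_{L^2(\RE\setminus\{y\})}^2,
\]
so these terms are in $H^1(0,T)$.

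For $I$, the crucial observation is that both $J_0(z)$ and $J_1(z)/z$ are bounded on $[0,\infty)$, so the kernel $K$ is uniformly bounded on the closed triangle $\Delta_T:=\{(x,t):0\le t\le T,\ |x|\le ct\}$. Cauchy--Schwarz then gives $I\in L^\infty(0,T)\subset L^2(0,T)$ with norm controlled by $\|\Psi_\circ\|_{L^2}$. To handle $I'$, I would differentiate under the integral via Leibniz's rule, obtaining two boundary contributions $c\,K(\mp ct,t)\,\Psi_\circ(y\pm ct)$, which lie in $L^2(0,T)$ by the boundedness of $K$ together with the argument of the previous paragraph, plus the interior integral $\int_{-ct}^{ct}(\partial_t K)(-\eta,t)\,\Psi_\circ(y+\eta)\,d\eta$.

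The main technical point, and what I expect to be the only real obstacle, is therefore to show that $\partial_t K$ is likewise uniformly bounded on $\Delta_T$. Setting $w:=\sqrt{(ct)^2-x^2}$ so that $\partial_t w= c^2t/w$, the chain rule together with the Bessel identities $J_0'(z)=-J_1(z)$ and $(z^{-1}J_1(z))'=-z^{-1}J_2(z)$ produces terms of the form $J_1(mcw/\hbar)/w$, $J_1(mcw/\hbar)\cdot (c^2 t/w)$ and $t\,J_2(mcw/\hbar)/w^2$, which are \emph{a priori} singular at the light-cone $w=0$. Using $J_1(z)=O(z)$ and $J_2(z)=O(z^2)$ as $z\to 0$, each such factor is seen to remain bounded at $w=0$, while for $w$ bounded away from $0$ the standard bounds on $J_0,J_1,J_2$ suffice. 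Hence $\partial_t K$ is bounded on $\Delta_T$, and a further application of Cauchy--Schwarz controls the interior integral in $L^\infty(0,T)\subset L^2(0,T)$. Summing the three contributions yields $\Psi^f(y,\cdot)\in H^1(0,T)$.
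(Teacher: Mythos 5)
Your argument is correct, but it takes a different route from the paper's for the integral term. The paper starts from the componentwise formula \eqref{free}, splits the integral at $\xi=y$, performs the substitution $\eta=ct-(\xi-y)$ so that the $t$-dependence sits partly in the argument of $\psi_k^\circ$, and then integrates by parts (or uses the bound $|\tfrac{d}{da}J_1(\sqrt{ab})/\sqrt{ab}|\le C/a$ with $a=2ct-\eta\ge ct$) to transfer derivatives onto the initial datum; the resulting expressions contain ${\psi_k^\circ}'$ and are controlled by $\|\Psi_\circ\|_{H^1(\RE\backslash\{y\})}$. You instead apply Leibniz's rule directly to $I(t)$ and prove that $\partial_t K$ extends to a bounded function on the closed light-cone triangle $\Delta_T$, using $J_1(z)=O(z)$ and $J_2(z)=O(z^2)$ to cancel the $c^2t/w$ singularity of $\partial_t w$ at $w=0$; this avoids integration by parts entirely and, notably, controls the integral contribution by $\|\Psi_\circ\|_{L^2}$ alone, the $H^1$ hypothesis being needed only for the transport term (exactly as in the paper's treatment of $u_{1,k}$). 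Both routes ultimately rest on the same Bessel asymptotics near the light cone, so neither is substantially harder; yours is arguably more transparent about where the regularity of $\Psi_\circ$ is actually used. Two small points you should make explicit: the possible jump of $\Psi_\circ$ at $\eta=0$ is an interior null set and does not obstruct differentiation under the integral sign, and the boundary terms $cK(\mp ct,t)\Psi_\circ(y\pm ct)$ are well defined because $J_1(z)/z\to 1/2$ as $z\to 0$, so $K$ extends continuously to $|x|=ct$. With these remarks the proof is complete.
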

\begin{proof}We use identity \eqref{free},  which we rewrite as 
\begin{equation*}
\psi_k^f(y,t)= u_{1,k}(t) +u_{2,k}(t) +u_{3,k}(t) +u_{4,k}(t)
\end{equation*}
with 
\begin{equation*}
u_{1,k}(t) = \frac{1}{2}\left((\psi_k^\circ+\psi_j^\circ)(y-ct)+(\psi_k^\circ-\psi_j^\circ)(y+ct)\right) ,
\end{equation*}
\begin{equation*}
u_{2,k}(t) = -\frac{m\,c}{2\,\hbar}\int_{y-ct}^{y+ct} ct\ \frac{J_1\left(\frac{m\,c}{\hbar}\sqrt{(ct)^2-(y-\xi)^2}\,\right)}{\sqrt{(ct)^2-(y-\xi)^2}}\psi_k^\circ(\xi)\,d\xi ,
\end{equation*}
\begin{equation*}
u_{3,k}(t) =i(-1)^{k}\frac{m\,c}{2\,\hbar}\int_{y-ct}^{y+ct}J_0\left(\frac{m\,c}{\hbar}\sqrt{(ct)^2-(y-\xi)^2}\,\right)\psi_k^\circ(\xi)\,d\xi , 
\end{equation*}
\begin{equation*}
u_{4,k}(t) = -\frac{mc}{2\,\hbar}\int_{y-ct}^{y+ct}\frac{J_1\left(\frac{m\,c}{\hbar}\sqrt{(ct)^2-(y-\xi)^2}\,\right)}{\sqrt{(ct)^2-(y-\xi)^2}}\ (y-\xi)\psi_j^\circ(\xi)\,d\xi,
\end{equation*}
with $k,j=1,2$ and  $k\neq j$. 

We start by noting that, for $k=1,2$,
\begin{equation*}
\int_0^T |{\psi_k^{\circ}}'(y+ct)|^2 \,dt  = c \int_y^{y+cT}  |{\psi_k^{\circ}}'(s)|^2 \,ds \leq c \|{\psi_k^{\circ}}' \|_{L^2(y,+\infty)}^2.
\end{equation*}
Similarly 
\begin{equation*}
\int_0^T |{\psi_k^{\circ}}'(y-ct)|^2  \,dt  \leq c \|{\psi_k^{\circ}}' \|_{L^2(-\infty,y)}^2,
\end{equation*}
and for  $\int_0^T |{\psi_k^{\circ}}(y\pm ct)|^2  \,dt$. Hence,    
\begin{equation*}
\|u_{k,1} \|_{H^1(0,T)} \leq C \sum_{j=1}^2 \|  {\psi_j^{\circ}}\|_{H^1(\RE\backslash\{y\})} .  
\end{equation*}

Next we analyze the integral terms in \eqref{free}. Recall that 
\begin{equation}\label{bounds}
\|\psi^\circ_k\|_{L^\infty(y,+\infty)}^2 \leq 2  \|{\psi_k^{\circ}}\|_{L^2(y,+\infty)} \|{\psi_k^{\circ}}' \|_{L^2(y,+\infty)} \;,\qquad
\|\psi^\circ_k\|_{L^\infty(-\infty,y)}^2 \leq  2 \|{\psi_k^{\circ}} \|_{L^2(-\infty,y)}\|{\psi_k^{\circ}}' \|_{L^2(-\infty,y)}.
\end{equation}
We shall prove that, for $l=2,3,4$, $u_{l,k}'(t)$ is bounded for all $t \in [0,T]$, this in turn implies that $u_{l,k} \in H^1(0,T)$. 

We start with $u_{2,k}$. We split the integral on the intervals $(y-ct,y)$  and $(y,y+ct)$  and consider first  the integration for  $\xi\in (y,y+ct)$, we have that  
\begin{equation}\label{integral1}
\int_{y}^{y+ct} ct\ \frac{J_1\left(\frac{m\,c}{\hbar}\sqrt{(ct)^2-(y-\xi)^2}\,\right)}{\sqrt{(ct)^2-(y-\xi)^2}} \psi_k^{\circ}(\xi) \, d\xi = 
ct \int_{0}^{ct} \ \frac{J_1\left(\frac{m\,c}{\hbar}\sqrt{(2ct-\eta)\eta}\,\right)}{\sqrt{(2ct-\eta)\eta}} \psi_k^{\circ}(ct+y-\eta) \, d\eta .
\end{equation}
Taking the derivative with respect to $t$  we obtain 
\begin{equation}\label{derivative1}\begin{aligned}
&\frac{d}{dt}\int_{y}^{y+ct} ct\ \frac{J_1\left(\frac{m\,c}{\hbar}\sqrt{(ct)^2-(y-\xi)^2}\,\right)}{\sqrt{(ct)^2-(y-\xi)^2}} \psi_k^{\circ}(\xi) \, d\xi\\ = &
c\int_{0}^{ct} \ \frac{J_1\left(\frac{m\,c}{\hbar}\sqrt{(2ct-\eta)\eta}\,\right)}{\sqrt{(2ct-\eta)\eta}} \psi_k^{\circ}(ct+y-\eta) \, d\eta  \\ 
&+c^2t \frac{J_1\left(\frac{m\,c}{\hbar}\sqrt{(ct)^2}\,\right)}{\sqrt{(ct)^2}} \psi_k^{\circ}(y^+)  \\
&+ct \int_{0}^{ct} \ \frac{d}{dt} \left(\frac{J_1\left(\frac{m\,c}{\hbar}\sqrt{(2ct-\eta)\eta}\,\right)}{\sqrt{(2ct-\eta)\eta}} \right) \psi_k^{\circ}(ct+y-\eta) \, d\eta \\ 
&+c^2t \int_{0}^{ct} \ \frac{J_1\left(\frac{m\,c}{\hbar}\sqrt{(2ct-\eta)\eta}\,\right)}{\sqrt{(2ct-\eta)\eta}} {\psi_k^{\circ}}'(ct+y-\eta) \, d\eta. 
\end{aligned}
\end{equation}
For the first, second and fourth term at the r.h.s. we use the bounds \eqref{bounds} and the fact that for any $a\geq 0$, there exists a constant $C$ such that $|J_1(\sqrt{a})/\sqrt a|\leq C$. So that, for all $t\in [0,T]$, each of those terms is bounded by $C_T \|{\psi_k^{\circ}}\|_{H^1(y,\infty)} $, where $C_T$ is a constant which depends on $T$. 

For the third term at the r.h.s. of Eq. \eqref{derivative1} we use the fact that  for any $a,b\geq 0$, there exists a constant $C$ such that $|\frac{d}{da}J_1(\sqrt{ab})/\sqrt{ab}|\leq C/a$. We have that, for all $\eta \in[0,ct]$, 
\begin{equation*}
t\left| \frac{d}{dt} \frac{J_1\left(\frac{m\,c}{\hbar}\sqrt{(2ct-\eta)\eta}\,\right)}{\sqrt{(2ct-\eta)\eta}} \right| 
\leq \frac{C t }{2ct-\eta}  \leq C .  
\end{equation*}
Hence,  the third term at the r.h.s. of Eq. \eqref{derivative1} is also bounded by  $C_T \|{\psi_k^{\circ}}\|_{H^1(y,\infty)} $. The integral of the form \eqref{integral1} on the interval $(y-ct,y)$ is bounded in a similar way and we omit the details. We have proved that,  for all $t\in[0,T]$, $|u_{2,k}(t)| \leq C_T  \|{\psi_k^{\circ}}\|_{H^1(\RE\backslash\{y\})} $. 

The analysis of $u_{3,k}$ is straightforward. Splitting again the integration interval and taking the derivative, we obtain 
\begin{equation*}\begin{aligned}
& \frac{d}{dt}\int_{y}^{y+ct}J_0\left(\frac{m\,c}{\hbar}\sqrt{(ct)^2-(y-\xi)^2}\,\right)\psi_k^\circ(\xi)\,d\xi \\ 
= & c \psi_k^\circ(y+ct) -  \frac{m\,c}{\hbar} c^2 t \int_{y}^{y+ct}   \frac{J_1\left(\frac{m\,c}{\hbar}\sqrt{(ct)^2-(y-\xi)^2}\,\right)}{\sqrt{(ct)^2-(y-\xi)^2}} \psi_k^\circ(\xi)\,d\xi.  
\end{aligned}\end{equation*}
And similarly for the integral on the interval $(y-ct,y)$.  By inequalities \eqref{bounds}, and since $J_1(a)/a$ is  bounded for all $a\geq 0$, we conclude  that  for all $t\in[0,T]$, $|u_{3,k}(t)| \leq C_T  \|{\psi_k^{\circ}}\|_{H^1(-\infty,y) \oplus  H^1(y,\infty)}$. 

We are left to analyze $u_{4,k}$. Also in this case we split the integral on the intervals $(y-ct,y)$  and $(y,y+ct)$, and take the derivative. We obtain 
\begin{equation}\label{maybe}\begin{aligned}
& \frac{d}{dt}\int_{y}^{y+ct}\frac{J_1\left(\frac{m\,c}{\hbar}\sqrt{(ct)^2-(y-\xi)^2}\,\right)}{\sqrt{(ct)^2-(y-\xi)^2}}\ (y-\xi)\psi_j^\circ(\xi)\,d\xi  \\ 
 = &   
- \frac{m\,c^2}{2\hbar}  (ct )\psi_j^\circ(y+ct)   +  
c^2 t \int_{y}^{y+ct}\frac{d}{d\xi}\left(\frac{J_1\left(\frac{m\,c}{\hbar}\sqrt{(ct)^2-(y-\xi)^2}\,\right)}{\sqrt{(ct)^2-(y-\xi)^2}} \right) \psi_j^\circ(\xi)\,d\xi  ,
\end{aligned} \end{equation}
where we used the identity 
\begin{equation*}
(y-\xi) \frac{d}{dt}\left(\frac{J_1\left(\frac{m\,c}{\hbar}\sqrt{(ct)^2-(y-\xi)^2}\,\right)}{\sqrt{(ct)^2-(y-\xi)^2}} \right)  = c^2 t \frac{d}{d\xi}\left(\frac{J_1\left(\frac{m\,c}{\hbar}\sqrt{(ct)^2-(y-\xi)^2}\,\right)}{\sqrt{(ct)^2-(y-\xi)^2}} \right). 
\end{equation*}
In equation \eqref{maybe} we  integrate by parts, and obtain 
\begin{equation*}\begin{aligned}
 &\frac{d}{dt}\int_{y}^{y+ct}\frac{J_1\left(\frac{m\,c}{\hbar}\sqrt{(ct)^2-(y-\xi)^2}\,\right)}{\sqrt{(ct)^2-(y-\xi)^2}}\ (y-\xi)\psi_j^\circ(\xi)\,d\xi   \\ 
 =  &
- c J_1\left(\frac{m\,c^2}{\hbar} t \right) \ \psi_j^\circ(y^+)   - 
c^2 t \int_{y}^{y+ct} \frac{J_1\left(\frac{m\,c}{\hbar}\sqrt{(ct)^2-(y-\xi)^2}\,\right)}{\sqrt{(ct)^2-(y-\xi)^2}}  \ {\psi_j^\circ}'(\xi)\,d\xi .
\end{aligned} \end{equation*}
A similar result holds true for the integral on the interval $(y-ct,y)$. By using again the bounds \eqref{bounds} and the boundedness of  $J_1(a)/a$ we obtain  $|u_{3,k}(t)| \leq C_T  \|{\psi_k^{\circ}}\|_{H^1(\RE\backslash\{y\})}$, and  this concludes the proof of the proposition. 
\end{proof}

\end{document}